\newcommand{\nosemic}{\renewcommand{\@endalgocfline}{\relax}}
\newcommand{\dosemic}{\renewcommand{\@endalgocfline}{\algocf@endline}}
\let\oldnl\nl
\newcommand{\nonl}{\renewcommand{\nl}{\let\nl\oldnl}}
\DeclarePairedDelimiterX\MeijerM[3]{\lparen\!}{\rparen}%
{\,#3\delimsize\vert\begin{smallmatrix}#1 \\ #2\end{smallmatrix}}
\newcommand\MeijerG[8][]{%
  G^{\,#2,#3}_{#4,#5}\MeijerM[#1]{#6}{#7}{#8}}
\newcommand\MeijerG*[7]{%
  G^{\,#1,#2}_{#3,#4}\MeijerM*{#5}{#6}{#7}}
\newtheorem{proposition}{Proposition}
\newcommand{\RNum}[1]{\uppercase\expandafter{\romannumeral #1\relax}}
\begin{document}

\title{Integrated Push-and-Pull Update Model for Goal-Oriented Effective Communication}
\author{Pouya Agheli, 
\IEEEmembership{Graduate Student Member, IEEE}, Nikolaos Pappas, 
\IEEEmembership{Senior Member, IEEE}, \\
Petar Popovski, \IEEEmembership{Fellow, IEEE},
and Marios Kountouris, \IEEEmembership{Fellow, IEEE}.
\thanks{P. Agheli and M. Kountouris are with the Communication Systems Dept., EURECOM, France, email: \texttt{agheli@eurecom.fr}. M. Kountouris is also with the Dept. of Computer Science and Artificial Intelligence, University of Granada, Spain, email: \texttt{mariosk@ugr.es}. N. Pappas is with the Dept. of Computer and Information Science, Linköping University, Sweden, email: \texttt{nikolaos.pappas@liu.se}. P. Popovski is with the Dept. of Electronic Systems, Aalborg University, Denmark, email: \texttt{petarp@es.aau.dk}. Part of this work is presented in \cite{agheli2023effective}. 
}}

\maketitle

\begin{abstract}
This paper studies decision-making for goal-oriented effective communication. We consider an end-to-end status update system where a sensing agent (SA) observes a source, generates and transmits updates to an actuation agent (AA), while the AA takes actions to accomplish a goal at the endpoint. We integrate the push- and pull-based update communication models to obtain a \emph{push-and-pull} model, which allows the transmission controller at the SA to decide to push an update to the AA and the query controller at the AA to pull updates by raising queries at specific time instances. To gauge effectiveness, we utilize a \emph{grade of effectiveness} (GoE) metric incorporating updates' freshness, usefulness, and timeliness of actions as qualitative attributes. We then derive effect-aware policies to maximize the expected discounted sum of updates' effectiveness subject to induced costs. The effect-aware policy at the SA considers the potential effectiveness of communicated updates at the endpoint, while at the AA, it accounts for the probabilistic evolution of the source and importance of generated updates. Our results show the proposed push-and-pull model outperforms models solely based on push- or pull-based updates both in terms of efficiency and effectiveness. Additionally, using effect-aware policies at both agents enhances effectiveness compared to periodic and/or probabilistic effect-agnostic policies at either or both agents.
\end{abstract}
\begin{IEEEkeywords}
Goal-oriented effective communication, status update systems, push-and-pull model, decision-making. 
\end{IEEEkeywords}

\IEEEpeerreviewmaketitle

\section{Introduction}
The emergence of cyber-physical systems empowered with interactive and networked sensing and actuation/monitoring agents has caused a shift in focus from extreme to sustainable performance. Emerging networks aim to enhance effectiveness in the system while substantially improving resource utilization, energy consumption, and computational efficiency. The key is to strive for a minimalist design, frugal in resources, which can scale effectively rather than causing network over-provisioning. This design philosophy has crystallized into the goal-oriented and/or semantic communication paradigm, holding the potential to enhance the efficiency of diverse network processes through a parsimonious usage of communication and computation resources \cite{kountouris2021semantics,popovski2020semantic}. Under an effectiveness perspective, a message is generated and conveyed by a sender \emph{if} it has the potential to have the \emph{desirable effect} or the \emph{right impact} at the destination, e.g., executing a critical action, for accomplishing a specific goal. This promotes system scalability and efficient resource usage by avoiding the acquisition, processing, and transportation of information turning out to be ineffective, irrelevant, or useless.

Messages, e.g., in the form of status update packets, are communicated over existing networked intelligent systems mostly using a \emph{push-based} communication model. Therein, packets arrived at the source are sent to the destination based on decisions made by the source, regardless of whether or not the endpoint has requested or plans to utilize these updates to accomplish a goal. In contrast, in a \emph{pull-based} model, the endpoint decides to trigger and requests packet transmissions from the source and controls the time and the type of generated updates \cite{pullB,pullC, pullD, pullDD, pullE, pullF, hatami2021aoi}. Nevertheless, this model does not consider the availability of the source to generate updates or the usefulness of those updates. To overcome these limitations, we propose an \emph{integrated push-and-pull} model that involves both agents/sides in the decision-making process, thereby combining push- and pull-based paradigms in a way that mitigates their drawbacks. In either model, decisions at the source or endpoint could influence the effectiveness of communicated updates. Therefore, we can categorize decision policies into \emph{effect-aware} and \emph{effect-agnostic}. Under an effect-aware policy, the source adapts its decisions by taking into account the effects of its communicated packets at the endpoint. Likewise, the endpoint initiates queries based on the evolution of the source and the expected importance of pulled updates. Under the effect-agnostic policy, however, decisions are made regardless of their consequent effect on the performance.


We illustrate the motivation behind the effect-aware integrated model with a real-world example: \textit{Consider a remote patient monitoring system. A medical expert (the endpoint) is responsible for responding to a patient's vital signs (the source) but may not always be available due to other responsibilities. The push-based model risks overwhelming the expert with irrelevant or redundant information, or worse, causing critical updates to be missed amid competing demands. Conversely, the pull-based model requires the expert to check the patient’s status at predefined intervals, but this rigidity may lead to missing urgent or critical changes in the patient's condition. The push-and-pull model strikes a balance between these approaches by integrating the strengths of both. It allows the experts to receive timely updates and act while maintaining the flexibility to manage other responsibilities or to respond effectively without dedicating all their time to a single patient. By integrating effect-aware decision-making, updates are prioritized based on their importance and the expert's availability to act. Furthermore, by having access to and continually monitoring the patient’s evolving health condition, the expert can dynamically adjust query timing and align decisions with the expected relevance of updates.}

In this work, we investigate a time-slotted end-to-end status update system where a sensing agent observes an information source and communicates updates/observations in the form of packets with an actuation agent. The actuation agent then takes action based on the updates successfully received as a means to accomplish a subscribed goal at the endpoint. We develop an integrated push-and-pull model, which allows both agents to make decisions based on their local policies or objectives. In particular, a transmission controller at the sensing agent decides to either send or drop update packets according to their potential usefulness at the endpoint. On the other side, a query controller at the actuation agent also determines the time instances around which the actuator should perform actions in the form of raising queries. In that sense, effective updates are those that result in the right impact and actuation at the endpoint. Those queries, however, are \emph{not} communicated to the sensing agent. Instead, the actuation agent acknowledges the sensing agent of effective updates. With prior knowledge that the effectiveness of updates depends on the actuator's availability to perform actions, the sensing agent can infer the initiated queries using those acknowledgments. A time diagram showing processes at both agents is depicted in Fig.~\ref{intro:fig}.

We introduce a metric to measure the effectiveness and the significance of updates and derive a class of optimal policies for each agent that makes effect-aware decisions to maximize the long-term expected effectiveness of update packets communicated to fulfill the goal subject to induced costs. These costs may arise from transmitting updates, initiating queries, and the availability of the actuation agent to respond and take action, all of which depend on the communication model. To do so, the agent first needs to estimate the necessary system parameters for making the right decisions. Our analytical and simulation results show that the integrated push-and-pull model comes with a higher energy efficiency compared to the push-based model and better effectiveness performance compared to the pull-based one. Moreover, we observe that utilizing effect-aware policies at both agents significantly improves the effectiveness performance of the system in the majority of the cases, with a large gap compared to periodic and probabilistic effect-agnostic policies at either or both agents. Accordingly, we demonstrate that the solution to find an optimal effect-aware policy at each agent converges to a \emph{threshold-based} agent decision framework where the agent can timely decide based on an individual lookup map in hand and threshold boundaries computed to satisfy the goal.
\begin{figure}[t!]
    \centering
    \includegraphics[width=1\linewidth]{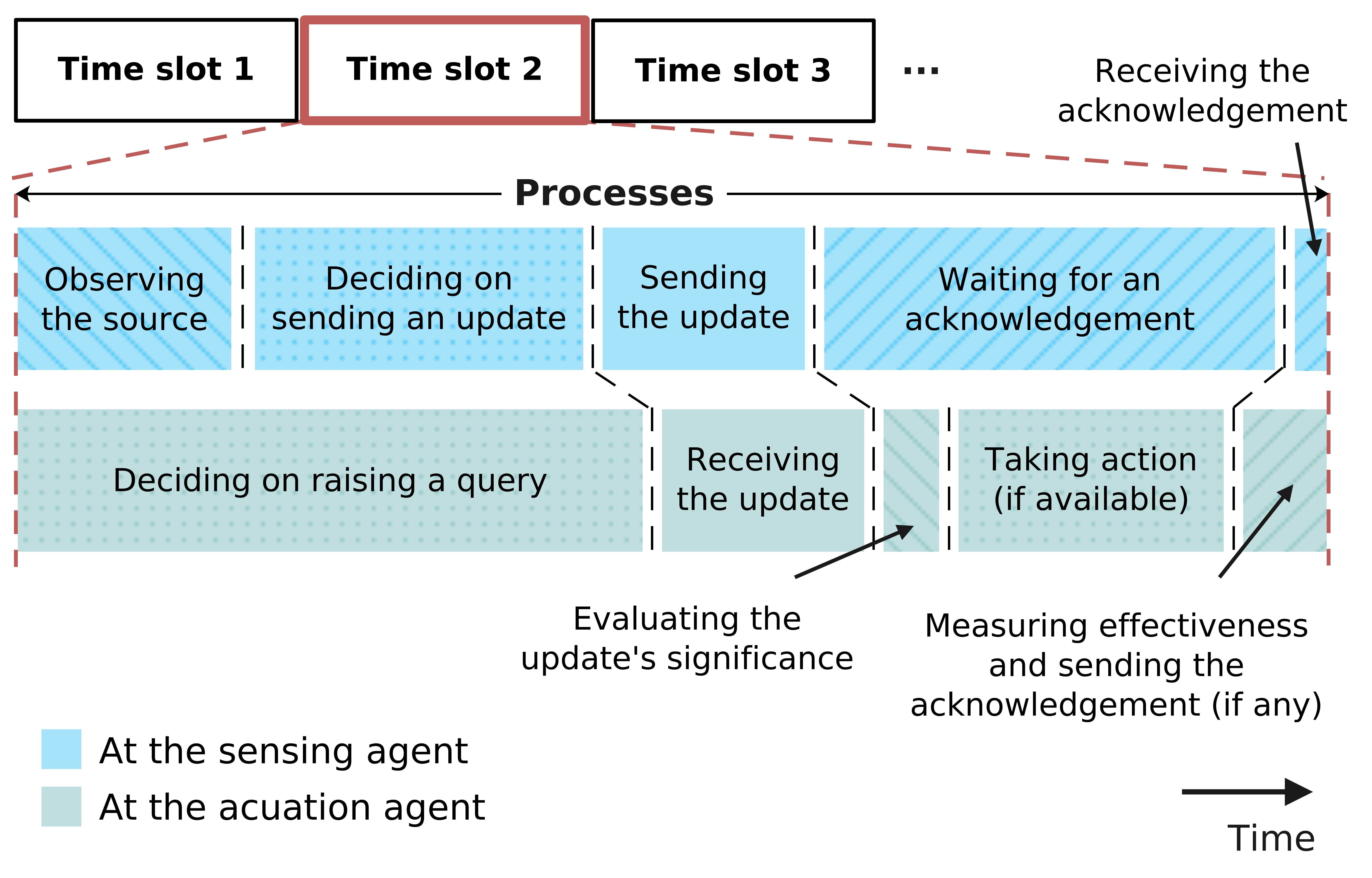}
    \vspace{-0.6cm}
    \caption{A timing diagram of processes involving the sensing and actuation agents, illustrating interactions and update communications leading to actions.}
    \label{intro:fig}
\end{figure}

\subsection{Related Works}
This paper widely broadens prior work on push-based and (query-) pull-based communications by enabling both agents to make decisions so as to maximize the effectiveness of communicated updates in the system. The pull-based communication model has been widely analyzed but not limited to \cite{pullB, pullC, pullD, pullDD, pullE, pullF, hatami2021aoi}. In \cite{pullB}, a metric called effective age of information (EAoI), which comprises the effects of queries and the freshness of updates in the form of the age of information (AoI) \cite{NowAoI,yates2021age, pappas2023age}, is introduced. Query AoI (QAoI), which is similar to the EAoI, is utilized in \cite{pullC, pullD, pullDD, pullE, pullF}. Following the same concept as the QAoI, on-demand AoI is introduced in \cite{hatami2021aoi,hatami2022demand}. Probe (query)-based active fault detection where actuation or monitoring agents adaptively decide to probe sensing agents to detect probable faults at the endpoint is studied in \cite{stamatakis2022semantics}. Most prior work has employed a pull-based communication model and focuses on the freshness and timeliness of information. In this work, we consider multiple information attributes and propose a \emph{grade of effectiveness} metric to measure the effectiveness of updates, which goes beyond metrics, including AoI, EAoI, QAoI, on-demand AoI, age of incorrect information (AoII) \cite{maatouk2022age}, and value of information (VoI) \cite{VoI_USSR, VoI}. In particular, we focus on the freshness of successfully received updates and the timeliness of performed actions as two attributes of interest through the link level, as well as on the usefulness/significance (semantics) of the updates to fulfill the goal at the source level. 

This paper extends our prior work \cite{agheli2023effective}, which only considers a pull-based model and an effectiveness metric with two freshness and usefulness attributes. As such, \cite{agheli2023effective} conveys a special form of the decision problem we solve here. In this work, we generalize the problem to a push-and-pull communication model, considering that the sensing and actuation agents individually make decisions and converge to a point where they can transmit updates and initiate queries, respectively, which maximize the effectiveness of updates and result in the right impact at the endpoint. Importantly, we assume that the source distribution is not known to the actuation agent, and the sensing agent does not have perfect knowledge of the goal. Therefore, the agents must estimate their required parameters separately. This approach is substantially different from the one in our previous work and other state-of-the-art approaches.

\subsection{Contributions}
The main contributions can be briefly outlined as follows.
\begin{itemize}
    \item We develop an \emph{integrated push-and-pull update communication model} owing to which both agents have decision-making roles in the acquisition and transmitting of updates and take appropriate actions to satisfy the goal, following the paradigm of goal-oriented communications. With this, the system becomes adaptable from an effectiveness viewpoint compared to the conventional push- and pull-based models.

    \item We use a \emph{grade of effectiveness} metric to capture the timely impact of communicated updates at the endpoint, which relies on the freshness of successfully communicated updates, the timeliness of actions performed, and the usefulness of those updates in fulfilling the goal. Our approach maps multiple information attributes into a unique metric that measures the impact or effect each status update packet traveling over the network can offer. 

    \item We obtain optimal model-based control policies for agents that make effect-aware decisions to maximize the discounted sum of updates' effectiveness while keeping the induced costs within certain constraints. To achieve this, we formulate an optimization problem, derive its dual form, and propose an iterative algorithm based on dynamic programming to solve the decision problem separately from each agent's perspective.

    \item We demonstrate that the integrated push-and-pull model offers higher energy efficiency than the push-based model and better effectiveness performance compared to the pull-based one. We also show that applying effect-aware policies at both agents results in better performance than in the scenarios where one or both agents utilize effect-agnostic policies. We also broaden our results via deriving model-free decisions using \emph{reinforcement learning}. Eventually, we provide a lookup map presenting optimal decisions for each agent that applies the effect-aware policy based on the given solution. This allows the agent to make decisions on time by merely looking up the map with the obtained threshold-based policy for the goal.
\end{itemize}

\textit{Notations:} $\mathbb{R}$, $\mathbb{R}_0^+$, and $\mathbb{N}$ indicate the sets of real, non-negative real, and natural numbers, respectively. $\mathbb{E}[\cdot]$ denotes the expectation operator, $\lvert\cdot \rvert$ depicts the absolute value operator, $\mathbbm{1}\{\cdot\}$ is the indicator function, $\mathcal{O}(\cdot)$ shows growth rate of a function, and $\lceil\cdot\rceil$ is the ceiling function.

\section{System Model}\label{sec2}
We consider an end-to-end communication system in which a \emph{sensing agent} (SA) sends messages in a time-slotted manner to an \emph{actuation agent} (AA) as a means to take effective action at the endpoint and satisfy a \emph{subscribed goal}. 
Specifically, the SA observes a source and generates \emph{status update packets} in each time slot, and a transmission controller decides whether to transmit that observation or not, following a specific policy. We assume that the source has finite-dimensional realizations and that observation at the $n$-th, $\forall n \in \mathbb{N}$, time slot is assigned a rank of importance $v_n$ from a finite set $\mathcal{V}=\{\nu_i\,\lvert\,i\in\mathcal{I}\}$, with $\mathcal{I}=\{1, 2, \dots, \lvert\mathcal{V}\rvert\}$, based on its significance or usefulness for satisfying the goal, measured or judged at the source level.\footnote{To determine the usefulness of an update, we can use the same \emph{metavalue} approach proposed in \cite[Section~III-A]{agheli2023semantic}.} The elements of $\mathcal{V}$ are independent and identically distributed (i.i.d.) with probability $p_i = p_\nu(\nu_i)$ for the $i$-th outcome, where $p_\nu(\cdot)$ denotes a given probability mass function (pmf).\footnote{A more elaborated model could consider the importance of a realization dependent on the most recently generated update at the SA. This implies that a less important update increases the likelihood of a more significant update occurring later, which can be captured utilizing a learning algorithm.}

The AA is assisted by a query controller that decides to initiate queries and pull new updates according to a certain policy. A received packet at the AA has a satisfactory or sufficient impact at the endpoint if that update achieves a minimum effectiveness level subject to the latest query initiated and the AA's availability to act on it. An effective update communication is followed by an \emph{acknowledgment of effectiveness} (E-ACK) signal sent from the AA to the SA to inform about the effective update communication. We assume all transmissions and E-ACK feedback occur over packet erasure channels (PECs), with $p_{\epsilon}$ and $p_{\epsilon}^\prime$ being the erasure probabilities in the forward communication and the acknowledgment links, respectively. Therefore, an E-ACK is not received at the SA due to either \emph{ineffective} update communication or erasure in the acknowledgment (backward) channel. With this interpretation, channel errors lead to graceful degradation of the proposed scheme. An initiated query does \emph{not} necessarily need to be shared with the SA. As discussed in Section~\ref{sec3c}, the SA can deduce the initiation of a query or the availability of the AA to take action from a successful E-ACK, given prior knowledge that an update can be effective \emph{only if} it arrives within the period during which the AA is available to act.

In this model, we consider the goal to be subscribed at the endpoint, with the AA fully aware of it. On the other hand, the SA does not initially know the goal but learns which updates could be useful to accomplish the goal based on the received E-ACK and observations' significance. Meanwhile, the AA is not aware of the evolution of the source or the likely importance of observations, attempting to approximate it from arrival updates. Consequently, the agents might use different bases to measure the usefulness of the updates and may need to adjust their criteria or valuation frameworks to account for possible changes in goals over time. Finally, we assume that update acquisition, potential communication, and waiting time for receiving an E-ACK occur within \emph{one} slot.

\subsection{Communication Model}\label{sec2a}
The following three strategies can be employed for effective communication of status updates.
\subsubsection{Push-based} Under this model, the SA pushes its updates to the AA, taken for instance based on the source evolution, without considering whether the AA has requested them or is available to take any action upon receipt. This bypasses the query controller, enabling the SA to directly influence actions at the AA side.
\subsubsection{Pull-based} In this model, the query controller plays a central role in the generation of update arrivals at the AA by pulling those updates from the SA. Here, the AA can \emph{only} take action when queries are initiated. However, this model excludes the SA from generating and sending updates.
\subsubsection{Push-and-pull} This model arises from integrating the push- and pull-based models so that the transmission and query controllers individually decide to transmit updates and send queries, respectively. Thereby, the AA is provided with a level of flexibility where it is also able to take some actions \emph{beyond} query instances within a \emph{limited} time. As a result, the effectiveness of an update packet depends on both agents' decisions. Dismissing the decision of either agent transforms the push-and-pull model into the push- or pull-based model.

\subsection{Agent Decision Policies}\label{sec2b}
We propose that the agents can adhere to the following decision policies, namely \emph{effect-agnostic} and \emph{effect-aware}, for transmitting updates or raising queries to satisfy the goal.

\subsubsection{Effect-agnostic} This policy employs a predetermined schedule or random process (e.g., Poisson, binomial, or Markov \cite{pullC, pullD, pullDD, pullE, pullF}) to send updates (initiate queries) from (by) the SA (AA), without considering their impact at the endpoint. We define a \emph{controlled update transmission (query) rate} specifying the expected constant number of updates (queries) to be communicated (initiated) within a period. Also, as the effect-agnostic policy does not account for what might be happening in the other agent during the decision time, there exists an \emph{aleatoric} uncertainty associated with random updates (queries).

\subsubsection{Effect-aware} The effect-aware policy takes into consideration the impacts of both agents' decisions at the endpoint. In this regard, the SA (AA) predicts the effectiveness status at the endpoint offered by a sent update that is potentially received at the AA (the usefulness of a possible update at the source). Then, based on this prediction, the agent attempts to adapt transmission (query) instants and send (pull) updates in the right slots. This policy comes with an \emph{epistemic} uncertainty because decisions are made according to probabilistic estimations, not accurate knowledge. However, such uncertainty can be decreased using learning or prediction techniques.

\section{Effectiveness Analysis Metrics}\label{sec3}
To achieve the right effect at the endpoint, an update packet that is successfully received at the AA has to satisfy a set of qualitative attributes, captured by the metrics as follows.

\subsection{Grade of Effectiveness Metric}\label{sec3a}
We introduce a \emph{grade of effectiveness} (GoE) metric that comprises several qualitative attributes and characterizes the amount of impact an update makes at the endpoint. Mathematically speaking, the GoE metric is modeled via a composite function $\operatorname{GoE}_n = (f \circ g)(\mathcal{I}_n)$ for the $n$-th time slot. Here, $g:\mathbb{R}^x \to \mathbb{R}^y, x\geq y$, is a (nonlinear) function of $x \in \mathbb{N}$ information attributes $\mathcal{I}_n \in \mathbb{R}^x$, and $f: \mathbb{R}^y \to \mathbb{R}$ is a context-aware function.\footnote{The GoE metric in this form can be seen as a special case of the semantics of information (SoI) metric introduced in \cite{kountouris2021semantics,pappas2021goal,agheli2023semantic}.} The particular forms of functions $f$ and $g$ could vary according to different subscribed goals and their relevant requirements. 

In this paper, without loss of generality, we consider \emph{freshness} of updates and \emph{timeliness} of actions as the main contextual attributes. The first comes in the form of the AoI metric, which is denoted by $\Delta_n\in\mathbb{N}$. The second is measured from the action's \emph{lateness}, denoted by $\Theta_n\in\mathbb{N}$. Thereby, we can formulate the GoE metric as follows
\begin{align}\label{sec3:eq1}
    \operatorname{GoE}_n = f_g\Big(g_\Delta(\hat{v}_n, \Delta_n), g_\Theta(\Theta_n); g_c(C_n)\Big)
\end{align}
where $C_n\in\mathbb{R}_0^+$ represents the overall cost incurred at the $n$-th time slot. Also, $g_\Delta : \hat{\mathcal{V}}\times \mathbb{N} \rightarrow \mathbb{R}_0^+$, $g_\Theta : \mathbb{N} \rightarrow \mathbb{R}_0^+$, and $g_c : \mathbb{R}_0^+ \rightarrow \mathbb{R}_0^+$ are \emph{penalty} functions, and $f_g : {\mathbb{R}_0^+}\times{\mathbb{R}_0^+}\times{\mathbb{R}_0^+} \rightarrow \mathbb{R}_0^+$ is a non-decreasing \emph{utility} function. Moreover, $g_\Delta$,  $g_\Theta$, and $g_c$ are non-increasing with respect to (w.r.t.) $\Delta_n$, $\Theta_n$, and $C_n$, respectively, while $g_\Delta$ is non-decreasing w.r.t. $\hat{v}_n$. Here, $\hat{v}_n$ is the usefulness of the received update from the endpoint's viewpoint at the $n$-th slot. Thus, we assume that $\hat{v}_n$ belongs to the set $\hat{\mathcal{V}}=\{0\}\cup  \{\hat{\nu}_j\,\lvert\,\hat{\nu}_j>0, j\in\mathcal{J}\}$ with i.i.d. elements, where $\mathcal{J}=\{1, 2, \dots, \lvert\hat{\mathcal{V}}\rvert-1\}$, the $j$-th element has probability $q_j = p_{\hat{\nu}}(\hat{\nu}_j)$, and $p_{\hat{\nu}}(\cdot)$ is a pmf derived in Section~\ref{sec5a}. Since the packet is sent over a PEC, $\hat{v}_n=0$ if it is erased or the update ends up being useless at the endpoint.

\subsubsection{AoI} Measuring the freshness of correctly received updates at the AA within a query slot, the AoI is defined as $\Delta_n = n - u(n)$, where $\Delta_0 = 1$ and $u(n)$ is the slot index of the latest successful update, which is given by 
\begin{align}\label{sec3:eq2}
u(n) = \operatorname{max}\big\{m\,\lvert\,m \leq n, \beta_m(1-\epsilon_m)=1\big\} 
\end{align}
with $\epsilon_m \in \{0, 1\}$ being the channel erasure at the $m$-th slot. In addition, $\beta_m\in\{0, 1\}$ indicates the query controller’s decision, where $\beta_m=1$ means pulling the update; otherwise, $\beta_m=0$.
    
\subsubsection{Action lateness} The lateness of an action performed at the $n$-th time slot in relevance to a query initiated at the $n^\prime$-th slot, $n^\prime\leq n$, is calculated as follows
\begin{align}\label{sec3:eq3}
    \Theta_n = (1\!-\!\beta_n)(n\!-\!n^\prime),
\end{align}
which is valid for $\Theta_n < \Theta_{\rm max}$. Herein, $\Theta_{\rm max}$ shows the \emph{width of action window} within which the AA can act on each query based on update arrivals from the SA. Outside the dedicated action window for the SA, the AA might undertake other tasks or communicate with other agents. Employing the push-based, pull-based, and push-and-pull update communication models, we have $\Theta_{\rm max} =\infty$, $\Theta_{\rm max} = 1$, and $\Theta_{\rm max} >1$, respectively. Fig.~\ref{Sec2:fig2} shows the action and idle windows for different models. A wider action window enables higher flexibility during heavy action loads at the cost of longer actuation availability.
\begin{figure}[t!]
    \centering
    \includegraphics[width=1\linewidth]{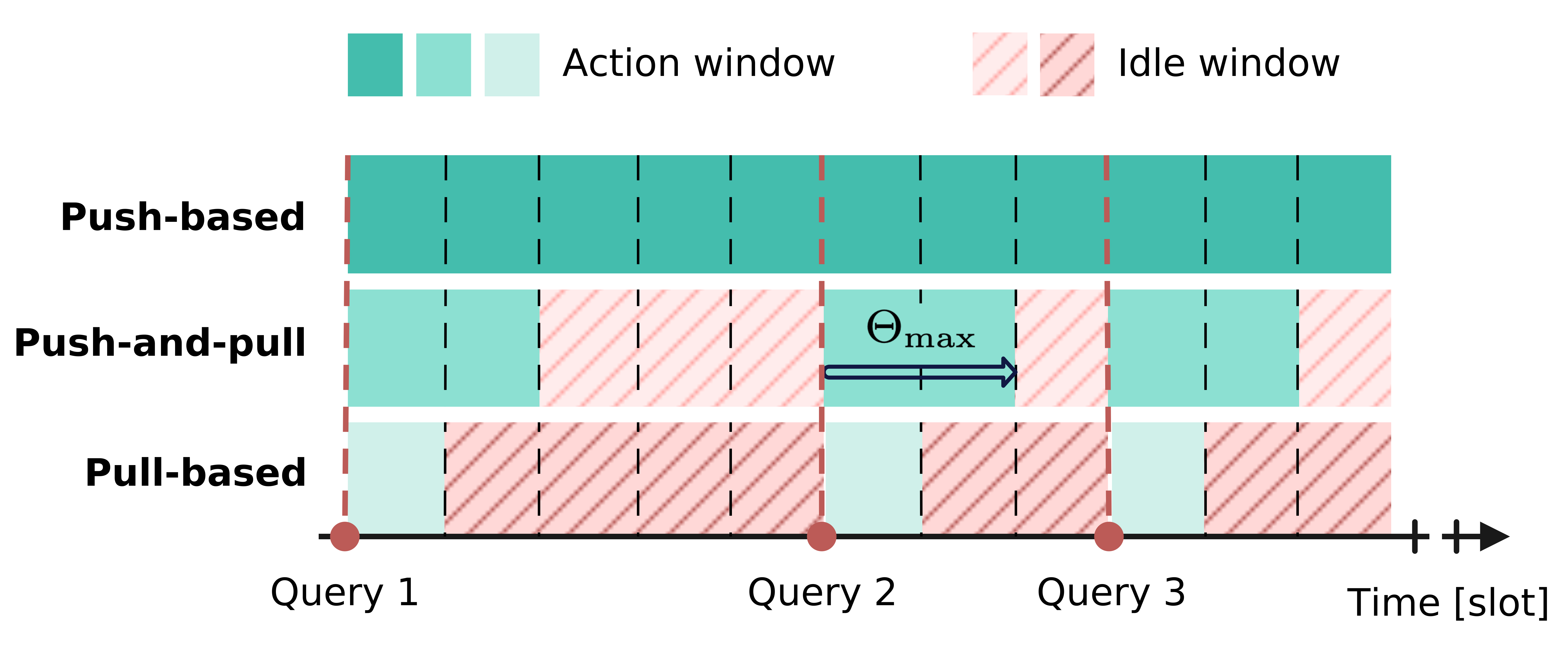}
    \vspace{-0.5cm}
    \caption{The outline of the action and idle windows in different models.}
    \label{Sec2:fig2}
\end{figure}

\subsection{Special Forms of the GoE}
The GoE metric’s formulation in \eqref{sec3:eq1} can simply turn into the QAoI and the VoI metrics as special cases. In this regard, we obtain a penalty function of the QAoI such that $\operatorname{GoE}_n = g_\Delta(\Delta_n)$ if we set $\Theta_{\rm max} = 1$, assume linear $g_\Theta(\cdot)$, and overlook updates' usefulness and cost. In addition, by removing the concepts of query and time, hence the freshness and timeliness in the GoE’s definition, we arrive at a utility function of the VoI, i.e., $\operatorname{GoE}_n = f_g(\hat{v}_n;g_c(C_n))$.

\subsection{Effectiveness Indicator}\label{sec3c}
An update at the $n$-th time slot is considered effective at the system level if its $\operatorname{GoE}_n$ is higher than a \emph{target effectiveness grade}, which is called $\operatorname{GoE}_{\rm tgt}$ and is necessary to satisfy the goal. Let us define $E_n$ as an \emph{effectiveness indicator} at the $n$-th time slot. Thus, we can write
\begin{align}\label{sec3:eq4}
    E_n = \mathbbm{1}\{\operatorname{GoE}_n \geq \operatorname{GoE}_{\rm tgt} \land\, \Theta_n < \Theta_{\rm max}\}.
\end{align}
The second condition in \eqref{sec3:eq4} appears from \eqref{sec3:eq3}. According to \eqref{sec3:eq4}, an update could be effective \emph{only if} it arrives within the action window of the AA. Hence, a consequent E-ACK shared with the SA can imply the initiation of a query or the availability of the AA to take action. Given the values of $\Delta_n$ and $\Theta_n$, and by inserting \eqref{sec3:eq1} into \eqref{sec3:eq4}, we reach a \emph{target usefulness} level $v_{\rm tgt}$ as the importance threshold that the update should exceed to be considered effective. In this case, if $\Theta_n < \Theta_{\rm max}$, we have
\begin{align}\label{sec3:eq5}
    &v_{\rm tgt} = \nonumber \\
    &\operatorname{min} \left\{ \hat{\nu}_j\,\lvert\, \hat{\nu}_j\in\hat{\mathcal{V}}, f_g(g_\Delta(\hat{\nu}_j, \Delta_n), g_\Theta(\Theta_n)) \geq \operatorname{GoE}_{\rm tgt}\right\}\!;
\end{align} 
otherwise, $v_{\rm tgt} = \operatorname{max}\{\hat{\nu}_j\,\lvert\,\hat{\nu}_j\in\hat{\mathcal{V}}\}$. In \eqref{sec3:eq5}, $v_{\rm tgt}$ can be computed by exhaustive search.

\section{Model-Based Agent Decisions}\label{sec4}
Here, we first formulate a decision problem for effect-aware policies, cast it as a \emph{constrained Markov decision process} (CMDP), and then solve it based on the problem's dual form.

\subsection{Problem Formulation}
The objective is to maximize the expected \emph{discounted} sum of the updates' effectiveness in fulfilling the subscribed goal, where each agent individually derives its decision policy subject to the relevant ensued cost by looking into the problem from its own perspective. Let us define $\pi_\alpha^*$ and $\pi_\beta^*$ as the classes of optimal policies for transmission and query controls, respectively. Therefore, we can formulate the decision problem solved at each agent as follows\footnote{In an ideal scenario, where both agents have full knowledge of the goals and the source’s evolution, the problem could be approached in a centralized manner, allowing for the joint derivation of policies for both agents. However, this falls outside the scope of this work.}
\begin{align}\label{sec4:eq1}
    \mathcal{P}_1 : ~ &\underset{\pi_\gamma}{\operatorname{max}} ~\underset{N \rightarrow \infty}{\lim \sup} ~ \frac{1}{N}\, \mathbb{E}\bigg[ \sum_{n = 1}^{N} \lambda^n E_n\big| E_0\bigg] \nonumber \\
    & {\rm s.t.}~~\, \underset{N \rightarrow \infty}{\lim \sup} ~ \frac{1}{N}\, \mathbb{E}\bigg[ \sum_{n = 1}^{N} \lambda^n c_\gamma(\gamma_n)\bigg] \leq C_{\gamma, \rm max}
\end{align}
where $\lambda \in [0,1]$ indicates a discount factor, and $\gamma\in\{\alpha, \beta\}$ is replaced with $\alpha$ and $\beta$ for the update transmission and query decision problems, respectively, at the SA and the AA. Herein, $\gamma_n \in \{0,1\}$ denotes the decision at the relevant agent, $c_\gamma:\{0,1\} \rightarrow \mathbb{R}_0^+$ is a non-decreasing cost function, and $C_{\gamma, \rm max}$ shows the maximum discounted cost.

For either update communication model introduced in Section~\ref{sec2a}, optimal decisions at the agent(s) following the effect-aware policy, i.e., $\pi_\alpha^*$ and/or $\pi_\beta^*$, are obtained by solving $\mathcal{P}_1$ in \eqref{sec4:eq1}. However, for every agent that employs an effect-agnostic policy, with regard to Section~\ref{sec2b}, there is a predefined/given set of decisions denoted by $\tilde{\pi}_\alpha$ or $\tilde{\pi}_\beta$ such that $\pi_\alpha=\tilde{\pi}_\alpha$ or $\pi_\beta=\tilde{\pi}_\beta$, respectively.

\subsection{CMDP Modeling}\label{sec4b}
We cast $\mathcal{P}_1$ from \eqref{sec4:eq1} into an \emph{infinite-horizon} CMDP denoted by a tuple $(\mathcal{S}_\gamma, \mathcal{A}_\gamma, P_\gamma, r_\gamma)$ with components that are defined via the agent that solves the decision problem.

\subsubsection{Modeling at the SA} The CMDP at the SA is modeled according to the following components:

\textbf{\textit{States --}} The state of the system $S_{\alpha, n}$ at the $n$-th slot from the SA's perspective is depicted by a tuple $(v_n, \hat{E}_n)$ in which $v_n$ is the update’s usefulness, and $\hat{E}_n \in \{0, 1\}$ shows the E-ACK arrival status at the SA after passing the PEC, as defined in Section~\ref{sec2}. Herein, we have $\hat{E}_n=0$ in case $E_n=0$ or the acknowledgment signal is erased; otherwise, $\hat{E}_n=1$. In this regard, $S_{\alpha, n}$ belongs to a finite and countable state space $\mathcal{S}_\alpha$ with $\lvert\mathcal{S}_\alpha \rvert = 2\cdot\lvert\mathcal{V}\rvert$ elements.

\textbf{\textit{Actions --}} We denote $\alpha_n$ the decision for update communication at the $n$-th slot, which is a member of an action space $\mathcal{A}_\alpha = \{0,1\}$. In this space, $0$ stands for discarding the update, and $1$ indicates transmitting the update.

\textbf{\textit{Transition probabilities --}} The transition probability from the current state $S_{\alpha, n}$ to the future state $S_{\alpha, n+1}$ via taking the action $\alpha_n$ is written by 
\begin{align}\label{sec4:eq2}
    p_\alpha(S_{\alpha, n}, \alpha_n, S_{\alpha, n+1}) &= \mathrm{Pr}\big((v_{n+1}, \hat{E}_{n+1})\,\lvert\,(v_{n}, \hat{E}_n), \alpha_n\big) \nonumber \\
    &= p_\nu(v_{n+1})\mathrm{Pr}\big( \hat{E}_{n+1}\,\lvert\,v_{n}, \alpha_n\big)
\end{align}
since $\hat{E}_{n+1}$ and $\hat{E}_{n}$ are independent, and $\hat{E}_{n}$ is independent of $v_n$, $\forall n$. We can derive the conditional probability in \eqref{sec4:eq2} as
\begin{itemize}
    \item $\mathrm{Pr}\big( \hat{E}_{n+1}=0\,\lvert\,v_{n}, \alpha_n\big) = \mathrm{Pr}(\hat{v}_{\rm tgt} > \alpha_n v_{n} ) = 1 - P_{\hat{v}_{\rm tgt}}(\alpha_nv_n)$,
    \item $\mathrm{Pr}\big( \hat{E}_{n+1}=1\,\lvert\,v_{n}, \alpha_n\big) = P_{\hat{v}_{\rm tgt}}(\alpha_nv_n)$,
\end{itemize}
where $\hat{v}_{\rm tgt}$ is a \emph{mapped} target usefulness that the SA considers, $P_{\hat{v}_{\rm tgt}}(\hat{v}_{\rm tgt}) = \sum_{\hat{v}_{\rm tgt}^\prime \leq \hat{v}_{\rm tgt}} p_{\hat{v}_{\rm tgt}}(\hat{v}_{\rm tgt}^\prime)$ indicates its cumulative distribution function (CDF), and $p_{\hat{v}_{\rm tgt}}(\cdot)$ shows the pmf derived in Section~\ref{sec5b}. 

\textbf{\textit{Rewards --}} The immediate reward of moving from the state $S_{\alpha, n}$ to the state $S_{\alpha, n+1}$ under the action $\alpha_n$ is equal to $r_\alpha(S_{\alpha, n}, \alpha_n, S_{\alpha, n+1}) = \hat{E}_{n+1}$ where it relies on the E-ACK status in the future state. 

Despite possible erasures over the acknowledgment link, the reward defined in this model fits into the decision problem in \eqref{sec4:eq1}, where the corresponding objective becomes maximizing the expected discounted sum of E-ACK arrivals. In this sense, $\hat{E}_n$ at the SA resembles $E_n$ at the AA plus noise in the form of the E-ACK erasure.

\subsubsection{Modeling at the AA} For modeling the problem at the AA, we have the components as follows:

\textbf{\textit{States --}} We represent the state $S_{\beta, n}$ at the $n$-th time slot using a tuple $(\hat{v}_n, \Delta_n, \Theta_n)$, where $\hat{v}_n$ is the usefulness of the received update from the perspective of the endpoint, $\Delta_n$ is the AoI, and $\Theta_n$ denotes the action lateness, as modeled in Section~\ref{sec3}. Without loss of generality, we assume the values of $\Delta_n$ and $\Theta_n$ are truncated by the maximum values notated as $\Delta_{\rm max}$ and $\Theta_{\rm max}$, respectively, such that the conditions
\begin{align}
        g_\Delta(\hat{v}_n, \Delta_{\rm max}\!-\!1)\leq (1 +\varepsilon_\Delta) g_\Delta(\hat{v}_n, \Delta_{\rm max}),
\end{align}
for $\hat{v}_n \in \hat{\mathcal{V}}$, and
\begin{align}
        g_\Theta(\Theta_{\rm max}\!-\!1)\leq (1 +\varepsilon_\Theta) g_\Theta(\Theta_{\rm max})
\end{align}   
are met with the relevant accuracy $\varepsilon_\Delta$ and $\varepsilon_\Theta$. Given this, at the AA, $S_{\beta, n}$ is a member of a finite and countable space $\mathcal{S}_\beta$ having $\lvert\mathcal{S}_\beta \rvert = \Delta_{\rm max}\cdot\Theta_{\rm max}\cdot\lvert\hat{\mathcal{V}}\rvert$ states.

\textbf{\textit{Actions --}} As already mentioned, $\beta_n$ shows the decision of raising a query at the $n$-th time slot and gets values from an action space $\mathcal{A}_\beta = \{0,1\}$. Here, $0$ and $1$ depict refusing and confirming to pull an update, respectively.

\textbf{\textit{Transition probabilities --}} The transition probability from the current state $S_{\beta, n}$ to the future state $S_{\beta, n+1}$ under the action $\beta_n$ is modeled as  
\begin{align}\label{sec4:eq5}
        & p_\beta(S_{\beta, n}, \beta_n, S_{\beta, n+1})= \nonumber \\
        &~~~~~~~\mathrm{Pr}\big((\hat{v}_{n+1}, \Delta_{n+1}, \Theta_{n+1})\,\lvert\,(\hat{v}_{n}, \Delta_n, \Theta_n), \beta_n\big).
    \end{align}
According to \eqref{sec4:eq5}, we can write: 
\begin{itemize}
    \item $\mathrm{Pr}((\hat{\nu}_{j}, \operatorname{min}\{\Delta_{n}\!+\!1, \Delta_{\rm max}\},\\ \operatorname{min}\{\Theta_{n}\!+\!1, \Theta_{\rm max}\})\,\lvert\,(\hat{\nu}_{j}, \Delta_n, , \Theta_n), \beta_n) = 1-\beta_n$,
        
    \item $\mathrm{Pr}((\hat{\nu}_{j}, \operatorname{min}\{\Delta_{n}\!+\!1, \Delta_{\rm max}\}, 1)\,\lvert\,(\hat{\nu}_{j}, \Delta_n, \Theta_n), \beta_n) = \beta_n p_{\epsilon}$,
        
    \item $\mathrm{Pr}((\hat{\nu}_{j^\prime},1, 1)\,\lvert\,(\hat{\nu}_{j}, \Delta_n, \Theta_n), \beta_n) = \beta_n(1\!-\!p_{\epsilon})q_{j^\prime}$,
\end{itemize}
with $\hat{\nu}_{j}, \hat{\nu}_{j^\prime} \in \hat{\mathcal{V}}$. For the rest of the transitions, we have $p_\beta(S_{\beta, n}, \beta_n, S_{\beta, n+1}) = 0$. As stated earlier, $q_j = p_{\hat{\nu}}(\hat{\nu}_j)$ with the pmf $p_{\hat{\nu}}(\cdot)$ derived in Section~\ref{sec5a}.

\textbf{\textit{Rewards --}} Arriving at the state $S_{\beta, n+1}$ from the state $S_{\beta, n}$ by taking the action $\beta_n$, is rewarded based on the effectiveness level provided at the future state such that
\begin{align}
&r_\beta(S_{\beta, n}, \beta_n, S_{\beta, n+1}) = E_{n+1} = \nonumber \\ 
&~~~~~~\mathbbm{1}\big\{f_g(g_\Delta(\hat{v}_{n+1}, \Delta_{n+1}), g_\Theta(\Theta_{n+1})) \geq \operatorname{GoE}_{\rm tgt}\big\}  \nonumber \\
&~~~~~~\times \mathbbm{1}\big\{ \Theta_{n+1} < \Theta_{\rm max}\big\}
\end{align}
by the use of \eqref{sec3:eq1} and \eqref{sec3:eq4}.

\subsubsection{Independence of the initial state}
Before we delve into the dual problem and solve it, we state and prove two propositions to show that the expected discounted sum of effectiveness in \eqref{sec4:eq1} is the same for all initial states.

\begin{proposition}\label{propose1} The CMDP modeled at the SA satisfies the accessibility condition.
\end{proposition}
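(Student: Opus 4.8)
\section*{Proof proposal}

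The plan is to verify accessibility by showing that, for every ordered pair of states in $\mathcal{S}_\alpha$, the target state can be reached from the source state with positive probability in at most two transitions under a suitable choice of actions. The structural property I would exploit is the product form of the kernel in \eqref{sec4:eq2}: the usefulness coordinate $v_{n+1}$ is an i.i.d.\ draw from $p_\nu$ that is independent of both the current state and the action, whereas the acknowledgment coordinate $\hat{E}_{n+1}$ is determined solely by the current usefulness $v_n$ and the action $\alpha_n$ through $P_{\hat{v}_{\rm tgt}}(\alpha_n v_n)$. This decoupling means that any desired value $v'$ of the usefulness coordinate is attained on the next transition with probability $p_\nu(v')>0$, so the entire reachability question collapses to steering the $\hat{E}$ coordinate.

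First I would dispatch the targets of the form $(v',0)$. Choosing the discard action $\alpha_n=0$ sets $\alpha_n v_n = 0$, and since $\hat{v}_{\rm tgt}$ takes only positive values we have $P_{\hat{v}_{\rm tgt}}(0)=0$; hence $\hat{E}_{n+1}=0$ deterministically while the independent draw yields $v_{n+1}=v'$ with probability $p_\nu(v')>0$. Thus $(v',0)$ is accessible from every state in a single step.

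Next I would treat the targets $(v',1)$, which carry the actual content. Because $\hat{E}_{n+1}=1$ is possible only when $P_{\hat{v}_{\rm tgt}}(v_n)>0$, I would route through the maximal-usefulness state $v_{\rm max}=\operatorname{max}\mathcal{V}$: from any state the i.i.d.\ draw reaches $(v_{\rm max},\cdot)$ with probability $p_\nu(v_{\rm max})>0$, and from there the transmit action $\alpha_n=1$ delivers $\hat{E}_{n+1}=1$ with probability $P_{\hat{v}_{\rm tgt}}(v_{\rm max})$ while simultaneously setting $v_{n+1}=v'$ with probability $p_\nu(v')$. This produces a positive-probability two-step path to $(v',1)$. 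Since both coordinate targets are reachable from any state, every state is accessible from every other and the accessibility condition holds.

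The step I expect to be the main obstacle is justifying the feasibility inequality $P_{\hat{v}_{\rm tgt}}(v_{\rm max})>0$, on which the reachability of all $\hat{E}=1$ states rests; without it those states would be transient dead-ends and the claim would fail as stated. I would argue that this positivity is exactly the requirement that the goal be achievable---there must exist some realization of the mapped target usefulness that the most significant update can meet---so it is guaranteed whenever $\operatorname{GoE}_{\rm tgt}$ is attainable at all. I would also record the mild standing assumption $p_\nu(\nu_i)>0$ for every $i\in\mathcal{I}$, implicit in treating the $\nu_i$ as genuine outcomes of $p_\nu$; any usefulness value of zero probability merely drops the associated states from $\mathcal{S}_\alpha$ and leaves the argument intact.
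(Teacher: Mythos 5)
Your proof is correct and takes the same basic route as the paper's, but the comparison is lopsided: the paper's entire proof is a one-sentence assertion that, given the transition probabilities in \eqref{sec4:eq2}, every state of $\mathcal{S}_\alpha$ is reachable from every other state in finitely many steps with non-zero probability under the policy, citing the definition in Bertsekas. You actually carry out the verification the paper skips: the product form $p_\nu(v_{n+1})\,\mathrm{Pr}\big(\hat{E}_{n+1}\,|\,v_n,\alpha_n\big)$ decouples the coordinates, giving one-step access to every target $(v',0)$ via $\alpha_n=0$ and a two-step path through $(v_{\rm max},\cdot)$ with the transmit action to every target $(v',1)$. More valuably, you surface two hypotheses the paper uses tacitly: full support of $p_\nu$, and the feasibility condition $P_{\hat{v}_{\rm tgt}}(v_{\rm max})>0$. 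You are right that the latter is genuinely load-bearing --- if it fails, no state with $\hat{E}=1$ can ever be entered under any policy, and the (strong) accessibility condition of the proposition fails, so the result implicitly presumes the goal is achievable; your reading of that positivity as attainability of $\operatorname{GoE}_{\rm tgt}$ by the most useful update is the correct interpretation. One minor simplification: for the $(v',0)$ targets you do not need the deterministic claim $P_{\hat{v}_{\rm tgt}}(0)=0$ (the paper never states that the $\vartheta_j$ are strictly positive); it suffices that $1-P_{\hat{v}_{\rm tgt}}(0)>0$, i.e., that the mapped target usefulness is not almost surely zero, which only excludes a degenerate goal. With that weakening, your argument matches the paper's claim exactly while being strictly more informative about when it holds.
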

\begin{proof}
Given the transition probabilities defined in \eqref{sec4:eq2}, every state $S_{\alpha, m} \in \mathcal{S}_\alpha$, $m\leq N$, is accessible or reachable from the state $S_{\alpha, n} $ in finite steps with a non-zero probability, following the policy $\pi_\alpha$. Therefore, the accessibility condition holds for the CMDP modeled at the SA \cite[Definition~4.2.1]{bertsekas2007volume}.
\end{proof}

\begin{proposition}\label{propose2} The modeled CMDP at the AA meets the weak accessibility condition.
\end{proposition}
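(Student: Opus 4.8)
The plan is to verify the weak accessibility condition \cite[Sec.~4.2]{bertsekas2007volume} by partitioning the state space $\mathcal{S}_\beta$ into a communicating set $\mathcal{S}_c$ and a transient set $\mathcal{S}_t$, and then checking that any two states of $\mathcal{S}_c$ are mutually accessible under some policy while every state of $\mathcal{S}_t$ is transient under \emph{every} stationary policy. In contrast to the SA model of Proposition~\ref{propose1}, full accessibility cannot hold here, since the AoI and lateness coordinates move deterministically—each either increments by one or resets to one—so reachability is directional. The structural fact I would establish first is the invariant $\Theta_n \le \Delta_n$, valid because a successful reception requires a previously raised query (so the last-success index never exceeds the last-query index); one then checks directly that each transition in \eqref{sec4:eq5} preserves it, using the natural ordering $\Theta_{\rm max}\le\Delta_{\rm max}$ at the truncation boundary. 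Hence the set $\{\Theta\le\Delta\}$ is closed under all actions, a fact I will use repeatedly.

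Next I would fix a reference state $r=(\hat{\nu}_{j_0},1,1)$ with $q_{j_0}>0$ and let $\mathcal{S}_c$ be the set of states accessible from $r$ under some policy. Mutual accessibility follows from one observation: from any state, taking $\beta_n=1$ reaches $r$ with probability $(1-p_\epsilon)q_{j_0}>0$ through the successful-query transition to $(\hat{\nu}_{j_0},1,1)$ in \eqref{sec4:eq5}; thus each $s\in\mathcal{S}_c$ reaches $r$, $r$ reaches each $s$ by construction, and so all states of $\mathcal{S}_c$ communicate (this needs only $0\le p_\epsilon<1$). That $\mathcal{S}_c$ is large follows by chaining the available moves from $r$: $\beta=0$ climbs the diagonal $(\hat{\nu}_{j_0},k,k)$, a query-with-erasure resets $\Theta$ to $1$ while incrementing $\Delta$, and a successful query redraws the usefulness; together these reach every admissible triple with $\Theta\le\Delta$.

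Finally I would show that $\mathcal{S}_t=\mathcal{S}_\beta\setminus\mathcal{S}_c$ is transient under every stationary policy. States with $\Theta>\Delta$ belong to $\mathcal{S}_t$ and are transient because $\{\Theta\le\Delta\}$ is closed and, under any policy, the chain enters it in finitely many steps—at once if a query is raised, otherwise when the incrementing $\Theta$ meets the diagonal—and never returns. The only other members of $\mathcal{S}_t$ are states unreachable from $r$ for degenerate parameters (e.g. $p_\epsilon=0$, which forces $\Theta=\Delta$ below $\Theta_{\rm max}$, or a vanishing $q_j$); the monotone reset-or-increment dynamics of $(\Delta,\Theta)$ prevent their reconstruction without first entering the recurrent core, so they too are transient. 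The partition $(\mathcal{S}_c,\mathcal{S}_t)$ then meets the weak accessibility condition, which yields the desired independence of the optimal value in \eqref{sec4:eq1} from the initial state.

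I expect the last step to be the main obstacle, because the definition requires transience under \emph{every} stationary policy rather than mere unreachability from $r$, so I must exclude spurious closed recurrent classes hiding in $\mathcal{S}_t$. The closedness of $\{\Theta\le\Delta\}$ disposes of the $\Theta>\Delta$ states uniformly in the policy, and the deterministic monotonicity of the AoI and lateness dynamics handles the degenerate cases; welding these observations into a single policy-independent transience argument is the delicate part of the proof.
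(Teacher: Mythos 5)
Your proof is correct under the natural non-degeneracy assumptions ($0<p_\epsilon<1$ and $q_j>0$ for all $j$), and it takes a genuinely different route from the paper. The paper partitions $\mathcal{S}_\beta$ by the AoI coordinate alone: $\mathcal{T}_a=\{S_{\beta,n}:\Delta_n=1\}$ is claimed mutually accessible and $\mathcal{T}_b=\{S_{\beta,n}:\Delta_n\geq 2\}$ transient under any policy. You instead partition along the invariant $\{\Theta\leq\Delta\}$ versus $\{\Theta>\Delta\}$ (equivalently, reachability from the reference state $r=(\hat{\nu}_{j_0},1,1)$), so your communicating class is much larger than the paper's. This difference is not cosmetic: by \eqref{sec4:eq5}, the only state with $\Delta'=1$ that any transition can produce is $(\hat{\nu}_{j'},1,1)$, so the paper's $\mathcal{T}_a$ contains states $(\hat{\nu}_j,1,\Theta)$ with $\Theta\geq 2$ that are never entered---exactly your $\Theta>\Delta$ transient states---and mutual accessibility within $\mathcal{T}_a$ fails as literally stated; moreover, the saturated state $(\hat{\nu}_j,\Delta_{\rm max},\Theta_{\rm max})\in\mathcal{T}_b$ is absorbing, hence recurrent, under the never-query policy (which is stationary and even cost-feasible), so the claim that all of $\mathcal{T}_b$ is transient under every policy also fails. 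Your construction places such recurrent-capable states in $\mathcal{S}_c$, where only accessibility under \emph{some} (e.g., fully randomized) policy is required---your detour through $r$ via a successful query, with probability $(1-p_\epsilon)q_{j_0}>0$, supplies it---while the genuinely unenterable $\Theta>\Delta$ states are policy-uniformly transient by your saturation argument ($\Theta$ pins at $\Theta_{\rm max}\leq\Delta_{\rm max}$ while $\Delta$ keeps climbing, and $\{\Theta\leq\Delta\}$ is closed). In short, your argument not only proves the proposition in the sense of \cite[Definition~4.2.2]{bertsekas2007volume} but yields a partition that is sound where the paper's sketch is loose.

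Two caveats. First, your claim that states with vanishing $q_j$ are transient under every stationary policy is wrong as stated: if $q_j=0$ and such usefulness values are retained in the state space, then $(\hat{\nu}_j,\Delta_{\rm max},\Theta_{\rm max})$ is unreachable from $r$ yet absorbing under the never-query policy, hence recurrent, and weak accessibility genuinely fails; the correct fix is to restrict $\hat{\mathcal{V}}$ to the support of $p_{\hat{\nu}}(\cdot)$ (as the paper implicitly does), not to argue transience. Your $p_\epsilon=0$ case, by contrast, is fine, since each state with $\Theta<\operatorname{min}\{\Delta,\Theta_{\rm max}\}$ can be visited at most once. Second, your invariant uses $\Theta_{\rm max}\leq\Delta_{\rm max}$, which the paper never states (though it holds in all its instantiations); since the closedness of $\{\Theta\leq\Delta\}$ at the truncation boundary depends on it, make that hypothesis explicit.
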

\begin{proof}
We divide the state space $\mathcal{S}_\beta$ into two disjoint spaces of $\mathcal{T}_a$ and $\mathcal{T}_b = \mathcal{S}_\beta - \mathcal{T}_a$, where $\mathcal{T}_a$ consists of all the states whose $\Delta_n=1$, i.e.,  $\mathcal{T}_a = \{S_{\beta,n} \,\lvert\, S_{\beta, n}=({\hat\nu}_j, 1, \Theta_n), \forall \hat{\nu}_j\in\hat{\mathcal{V}}, \Theta_n=1, 2, \dots, \Theta_{\rm max}\}$. Thus, $\mathcal{T}_b$ includes the rest of the states with $\Delta_n\geq 2$. With regard to the transition probabilities derived in \eqref{sec4:eq5}, all states of $\mathcal{T}_b$ are transient under any policy, while every state of an arbitrary pair of two states in $\mathcal{T}_a$ is accessible from the other state. Accordingly, the weak accessibility condition in the modeled CMDP at the AA is satisfied according to \cite[Definition~4.2.2]{bertsekas2007volume}.
\end{proof}

Given Propositions~\ref{propose1} and \ref{propose2}, we can show that the expected effectiveness obtained by $\mathcal{P}_1$ in \eqref{sec4:eq1} is the same for all initial states \cite[Proposition~4.2.3]{bertsekas2007volume}. In this regard, $E_n$, $\forall n$, is independent of $E_0$ for either model, thus we arrive at the following decision problem:
\begin{align}\label{sec4:eq7}
    \mathcal{P}_2 : ~ &\underset{\pi_\gamma}{\operatorname{max}} ~\underset{N \rightarrow \infty}{\lim \sup} ~ \frac{1}{N}\, \mathbb{E}\bigg[ \sum_{n = 1}^{N} \lambda^n E_n\bigg] \nonumber \\
    & {\rm s.t.}~~\, \underset{N \rightarrow \infty}{\lim \sup} ~ \frac{1}{N}\, \mathbb{E}\bigg[ \sum_{n = 1}^{N} \lambda^n c_\gamma(\gamma_n)\bigg] \leq C_{\gamma, \rm max}
\end{align}
for $\gamma=\{\alpha, \beta\}$. Applying Propositions~\ref{propose1} and \ref{propose2} confirms that there exist stationary optimal policies $\pi_\alpha^*$ and $\pi_\beta^*$ for $\mathcal{P}_2$ solved at the SA and the AA, respectively, where both policies are \emph{unichain} \cite[Proposition~4.2.6]{bertsekas2007volume}. 

\subsection{Dual Problem}
To solve the decision problem $\mathcal{P}_2$ given in \eqref{sec4:eq7}, we first define an unconstrained form for the problem via dualizing the constraint. Then, we propose an algorithm to compute the decision policies at both agents.

The unconstrained form of the problem is derived by writing the Lagrange function $\mathcal{L}(\mu; \pi_\gamma)$ as below
\begin{align}\label{sec4:eq8}
    \mathcal{L}(\mu; \pi_\gamma) &= ~\underset{N \rightarrow \infty}{\lim \sup} ~ \frac{1}{N}\, \mathbb{E}\bigg[\sum_{n = 1}^{N} \lambda^n\Big(E_n - \mu c_\gamma(\gamma_n)\Big)\bigg] \nonumber \\
    &~~~+ \mu C_{\gamma, \rm max} 
\end{align}
with $\mu \geq 0$ being the Lagrange multiplier. According to \eqref{sec4:eq8}, we arrive at the following dual problem to be solved:
\begin{equation}\label{sec4:eq9}
    \mathcal{P}_3 : ~ \underset{\mu \geq 0}{\inf}~ \underbrace{\underset{\pi_\gamma}{\operatorname{max}}~\mathcal{L}(\mu; \pi_\gamma)}_{\coloneqq\, h_\gamma(\mu)}
\end{equation}
where $h_\gamma(\mu)=\mathcal{L}(\mu; \pi_{\gamma, \mu}^*)$ is the Lagrange dual function with $\pi_{\gamma, \mu}^* \!: \mathcal{S}_\gamma \rightarrow \mathcal{A}_\gamma$ denoting a stationary $\mu$-optimal policy, which is obtained as
\begin{equation}\label{sec4:eq10}
    \pi_{\gamma, \mu}^* =  \underset{\pi_\gamma}{\operatorname{arg\,max}}~\mathcal{L}(\mu; \pi_\gamma)
\end{equation}
for $\mu$ derived in the dual problem $\mathcal{P}_3$. As the dimension of the state space $\mathcal{S}_\gamma$ is finite for both defined models, the growth condition is met \cite{altman1999constrained}. Furthermore, the immediate reward and the induced cost are bounded according to Section~\ref{sec4}. Given these satisfied conditions, from \cite[Corollary~12.2]{altman1999constrained}, we can assert that $\mathcal{P}_2$ and $\mathcal{P}_3$ converge to the same expected values such that we can write
\begin{equation}\label{sec4:eq11}
   \underset{\mu \geq 0}{\inf}~ \underset{\pi_\gamma}{\operatorname{max}}~\mathcal{L}(\mu; \pi_\gamma) =  \underset{\mu \geq 0}{\inf}~ \mathcal{L}(\mu; \pi_\gamma^*) = \underset{\pi_\gamma}{\operatorname{max}}~ \mathcal{L}(\mu^*; \pi_\gamma)
\end{equation}
for some policy class $\pi_\gamma^*$. Given the conditions satisfied, there exist non-negative optimal values for the Lagrange multiplier $\mu^*$ under Slater's condition such that \eqref{sec4:eq11} holds \cite[Theorem~12.8]{altman1999constrained}, establishing the strong duality between $\mathcal{P}_2$ and $\mathcal{P}_3$.

We can now proceed to derive the optimal policies at the SA and the AA from the decision problem $\mathcal{P}_3$ by applying an \emph{iterative algorithm} in line with the dynamic programming approach based on \eqref{sec4:eq8}--\eqref{sec4:eq10} \cite{hatami2022demand}. 

\subsection{Iterative Algorithm}
The iterative algorithm is given in Algorithm~\ref{Alg1} and consists of two \emph{inner} and \emph{outer} loops. The inner loop is for computing the $\mu$-optimal policy, i.e., $\pi_{\gamma, \mu}^*$, using the \emph{value iteration} method. Over the outer loop, the optimal Lagrange multiplier $\mu^*$ is derived via the \emph{bisection search} method.
\SetKwFunction{FMain}{$\text{policy}$} 
\SetKwProg{Fn}{Function}{:}{}
\begin{algorithm}[t!]
{
\DontPrintSemicolon
    \caption{Solution for deriving $\pi_{\gamma}^*$ and $\mu^*$} \label{Alg1}
    \KwInput{Given parameters $N \gg 1$, $C_{\gamma, \rm max}$, $\eta$, $\varepsilon_\mu$, CMDP's state space, i.e., $\mathcal{S}_\gamma$, and action space, i.e., $\mathcal{A}_\gamma$. The form of the cost function $c_\gamma(\cdot)$. Initial values $l \leftarrow 0$, $\mu^{(0)} \leftarrow 0$, $\mu_{-}^{(0)} \leftarrow 0$, $\mu_{+}^{(0)} > 0$, $\pi_{\gamma, \mu_{-}^{(0)}} \leftarrow 0$, and $\pi_{\gamma, \mu_{+}^{(0)}} \leftarrow 0$.
    }
    Derive $\pi_{\gamma, \mu}^*(s)$, $\forall s \in \mathcal{S}_\gamma$, via running \FMain{$\mu^{(0)}$}.\\
    \lIf{$\mathbb{E}\Big[\sum_{n = 1}^{N} c_\gamma(\gamma_n)\Big] \leq NC_{\gamma, \rm max}$}{\textbf{goto} {\scriptsize{\textbf{\ref{line:return1}}}}.}
    \While{$|\mu_{+}^{(l)}-\mu_{-}^{(l)}| \geq \varepsilon_\mu$}
    {
    \nonl\textit{{Step}} $l$: \Comment{\small \textit{Outer loop (Bisection search)}}\\
    set $l \leftarrow l+1$, $\mu_{-}^{(l)} \leftarrow \mu_{-}^{(l-1)}$, and $\mu_{+}^{(l)} \leftarrow \mu_{+}^{(l-1)}$.\\
    Update $\mu^{(l)} \leftarrow \frac{\mu_{-}^{(l)}+\mu_{+}^{(l)}}{2}$. \\
    Improve $\pi_{\gamma, \mu}^* \leftarrow$ \FMain{$\mu^{(l)}$}.\\
    \If{$\mathbb{E}\Big[\sum_{n = 1}^{N} c_\gamma(\gamma_n)\Big] \geq NC_{\gamma, \rm max}$}{$\mu_{-}^{(l)} \leftarrow \mu^{(l)}$, and $\pi_{\gamma, \mu_{-}^{(l)}} \leftarrow$ \FMain{$\mu_{-}^{(l)}$}.}
    \lElse{$\mu_{+}^{(l)} \leftarrow \mu^{(l)}$, and $\pi_{\gamma, \mu_{+}^{(l)}} \leftarrow$ \FMain{$\mu_{+}^{(l)}$}.}
    } 
    \lIf{$\mathbb{E}\Big[\sum_{n = 1}^{N} c_\gamma(\gamma_n)\Big] < NC_{\gamma, \rm max}$}{$\pi_{\gamma, \mu}^*(s) \leftarrow \eta \pi_{\gamma, \mu_{-}^{(l)}}(s) + (1-\eta)\pi_{\gamma, \mu_{+}^{(l)}}(s)$, $\forall s \in \mathcal{S}_\gamma$.}
    \KwRet $\mu^* = \mu^{(l)}$ and $\pi_\gamma^*(s) = \pi^*_{\gamma, \mu}(s)$, $\forall s\in\mathcal{S}_\gamma$.\label{line:return1}\\
    \nonl
    \nonl\hrulefill\\
    \nonl
    \Fn{\FMain{$\mu$}}{
    \KwInput{Known parameters from the outer loop. Initial values $k \leftarrow 1$, $\pi_{\gamma, \mu}(s) \leftarrow 0$, and $V_{k}^{\pi_{\gamma, \mu}}(s) \leftarrow 0$, $\forall s \in \mathcal{S}_\gamma$.
    }
        \nonl\textit{{Iteration}} $k$: \Comment{\small \textit{Inner loop (Value iteration)}}\\
        \For{state $s \in \mathcal{S}_\gamma$\label{line:iter_t}}{compute $V_{k}^{\pi_{\gamma, \mu}}(s)$ from \eqref{sec4:eq13}\label{line:compute_v}.\\
        Improve $\pi_{\gamma, \mu}(s)$ according to \eqref{sec4:eq14} and \eqref{sec4:eq15}.
        }
        \If{$\operatorname{sp}\!\left(V_{k}^{\pi_{\gamma, \mu}} - V_{k-1}^{\pi_{\gamma, \mu}}\right) \geq \varepsilon_\pi$ as in \eqref{sec4:eq17}}{step up $k \leftarrow k+1$, and \textbf{goto} {\scriptsize{\textbf{\ref{line:iter_t}}}}.}
        \KwRet $\pi_{\gamma, \mu}^*(s)=\pi_{\gamma, \mu}(s)$, $\forall s\in\mathcal{S}_\gamma$.
  }
}
\end{algorithm}

\subsubsection{Computing $\pi_{\gamma, \mu}^*$}
Applying the value iteration method, the decision policy is iteratively improved given $\mu$ from the outer loop (bisection search). Thus, $\pi_{\gamma, \mu}(s) \in \mathcal{A}_\gamma$, $\forall s \in \mathcal{S}_\gamma$, is updated such that it maximizes the \emph{expected utility} (value) $V_{k}^{\pi_{\gamma, \mu}}(s)$ at the $k$-th, $\forall k \in \mathbb{N}$, iteration, which is obtained as
\begin{align}\label{sec4:eq12}
    V_{k}^{\pi_{\gamma, \mu}}(s) &= \mathbb{E}\big[r_k+\lambda r_{k+1}+\lambda^2 r_{k+2}+\cdot\cdot\cdot\,\lvert\,s_k=s\big] \nonumber \\
    &\approx \mathbb{E}\big[r_k+\lambda V_{k-1}^{\pi_{\gamma, \mu}} \,\lvert\,s_k=s\big]
\end{align}
where $s_k$ denotes the state at the $k$-th iteration, and $r_k$ is the corresponding reward at that state. 
The approximation in \eqref{sec4:eq12} appears after bootstrapping the rest of the discounted sum of the rewards by the value estimate $V_{k-1}^{\pi_{\gamma, \mu}}$. Under the form of the value iteration for the unichain policy MDPs \cite{puterman2014markov}, the optimal value function is derived from Bellman's equation \cite{bellman1952theory}, as
\begin{align}\label{sec4:eq13}
    &V_{k}^{\pi_{\gamma, \mu}}(s) = \nonumber \\
    &~~~\underset{\gamma \in \mathcal{A}_\gamma}{\operatorname{max}}\sum_{s^\prime \in \mathcal{S}_\gamma}p_\gamma(s, \gamma, s^\prime) \Big[r_{\gamma, \mu}(s, \gamma, s^\prime) + \lambda V_{k-1}^{\pi_{\gamma, \mu}}(s^\prime) \Big]
\end{align}
for the state $s\in\mathcal{S}_\gamma$. Consequently, the decision policy in that state is improved by
\begin{align}\label{sec4:eq14}
    &\pi_{\gamma, \mu}(s) \in \nonumber \\
    &\underset{\gamma \in \mathcal{A}_\gamma}{\operatorname{arg\,max}}\sum_{s^\prime \in \mathcal{S}_\gamma}p_\gamma(s, \gamma, s^\prime) \Big[r_{\gamma, \mu}(s, \gamma, s^\prime) + \lambda V_{k-1}^{\pi_{\gamma, \mu}}(s^\prime) \Big].
\end{align}
In \eqref{sec4:eq13} and \eqref{sec4:eq14}, we define a \emph{net} reward function as 
\begin{align}\label{sec4:eq15}
    r_{\gamma, \mu}(s, \gamma, s^\prime)=r_{\gamma}(s, \gamma, s^\prime)-\mu c_\gamma(\gamma),
\end{align}
which takes into account the cost caused by the action taken.

The value iteration stops running at the $k$-th iteration once the following convergence criterion is met \cite{puterman2014markov}:
\begin{align}\label{sec4:eq16}
    \operatorname{sp}\!\left(V_{k}^{\pi_{\gamma, \mu}} - V_{k-1}^{\pi_{\gamma, \mu}}\right) < \varepsilon_\pi
\end{align}
where $\varepsilon_\pi >0$ is the desired convergence accuracy, and $\operatorname{sp}(\cdot)$ indicates a \emph{span} function $\mathbb{R}_0^+ \rightarrow \mathbb{R}_0^+$ given as
\begin{align}\label{sec4:eq17}
    \operatorname{sp}\!\left(V_{k^\prime}^{\pi_{\gamma, \mu}}\right) = \underset{s\in \mathcal{S}}{\operatorname{max}}\, V_{k^\prime}^{\pi_{\gamma, \mu}}(s) - \underset{s\in \mathcal{S}}{\operatorname{min}}\, V_{k^\prime}^{\pi_{\gamma, \mu}}(s)
\end{align}
by using the span seminorm for the arbitrary $k^\prime$-th iteration \cite[Section~6.6.1]{puterman2014markov}. 
As the decision policies are unichain and have aperiodic transition matrices, the criterion in \eqref{sec4:eq16} is satisfied after finite iterations for any value of $\lambda\in[0,1]$ \cite[Theorem~8.5.4]{puterman2014markov}.

\subsubsection{Computing $\mu^*$}
We leverage the bisection search method to compute the optimal Lagrange multiplier over multiple steps in the outer loop based on the derived $\pi_{\gamma, \mu}^*$ from the inner loop. Starting with an initial interval $[\mu_{-}^{(0)}, \mu_{+}^{(0)}]$ such that $h_\gamma(\mu_{-}^{(0)}) h_\gamma(\mu_{+}^{(0)}) <0$, the value of the multiplier at the $l$-th, $\forall l\in \mathbb{N}$, step is improved by $\mu^{(l)} = \frac{\mu_{-}^{(l)}+\mu_{+}^{(l)}}{2}$. As shown in Algorithm~\ref{Alg1}, at each step, the value of either $\mu_{-}^{(l)}$ or $\mu_{+}^{(l)}$ and the corresponding decision policy $\pi_{\gamma, \mu_{-}^{(l)}}$ or $\pi_{\gamma, \mu_{+}^{(l)}}$, respectively, are updated according to the cost constraint in \eqref{sec4:eq7} until a stopping criterion $\rvert \mu_{+}^{(l)}-\mu_{-}^{(l)} \rvert < \varepsilon_\mu$ is reached with the accuracy $\varepsilon_\mu$. Considering \eqref{sec4:eq8} and \eqref{sec4:eq9}, $h_\gamma(\mu)$ is a non-increasing function of $\mu$ in the unsettled region where the cost constraint has not yet been met. In this regard, the bisection method searches for the smallest Lagrange multiplier that guarantees the constraint. Also, one can show that $h_\gamma(\mu)$ denotes a Lipschitz continuous function with the Lipschitz constant as below
\begin{align*}
    \bigg\lvert\, C_{\gamma, \rm max} - \underset{N \rightarrow \infty}{\lim \sup} ~ \frac{1}{N}\, \mathbb{E}\bigg[ \sum_{n = 1}^{N} \lambda^n c_\gamma(\gamma_n)\bigg] \bigg\rvert.
\end{align*}
Therefore, the bisection search converges to the optimal value of $\mu$ within finite steps \cite[pp.~294]{Wood2009}.

After the outer loop stops running, we obtain a stationary \emph{deterministic} decision policy as $\pi^*_{\gamma} = \pi_{\gamma, \mu}$ if the following condition holds:
\begin{align}\label{sec4:eq19}
   \underset{N \rightarrow \infty}{\lim \sup} ~ \frac{1}{N}\, \mathbb{E}\bigg[ \sum_{n = 1}^{N} \lambda^n c_\gamma(\gamma_n)\bigg] = C_{\gamma, \rm max}.
\end{align}
Otherwise, the derived policy becomes \emph{randomized} stationary in the shape of mixing two deterministic policies $\pi_{\gamma, \mu_{-}^{(l)}}$ and $\pi_{\gamma, \mu_{+}^{(l)}}$ with probability $\eta\in[0, 1]$ \cite{beutler1985optimal}, 
where 
\begin{align*}
    \pi_{\gamma, \mu_{-}^{(l)}} = \underset{\mu \rightarrow \mu_{-}^{(l)}}{\lim} \pi_{\gamma, \mu}, ~~ \pi_{\gamma, \mu_{+}^{(l)}} = \underset{\mu \rightarrow \mu_{+}^{(l)}}{\lim} \pi_{\gamma, \mu}
\end{align*}
in the $l$-th step, after which the bisection search terminates. Hence, we can write
\begin{align}\label{sec4:eq20}
    \pi_{\gamma}^* \leftarrow \eta \pi_{\gamma, \mu_{-}^{(l)}} + (1-\eta)\pi_{\gamma, \mu_{+}^{(l)}},
\end{align}
which implies that the decision policy is randomly chosen as $\pi^*_{\gamma} = \pi_{\gamma, \mu_{-}^{(l)}} $ and $\pi^*_{\gamma} = \pi_{\gamma, \mu_{+}^{(l)}} $ with probabilities $\eta$ and $1-\eta$, respectively. In \eqref{sec4:eq20}, $\eta$ is computed such that the condition in \eqref{sec4:eq19} is maintained.

\subsubsection{Complexity analysis}
The value iteration approach in the inner loop is \emph{polynomial} with $\mathcal{O}(\rvert\mathcal{A}_\gamma\rvert\rvert\mathcal{S}_\gamma\rvert^2)$ arithmetic operations at each iteration. Besides, the bisection search takes $\lceil \log_2(\frac{\mu_{+}^{(0)}}{\epsilon_\mu})\rceil$ steps to reach the optimal Lagrange multiplier within the tolerance of $\varepsilon_\mu$, given the derived policy. Here, $\mu_{+}^{(0)}$ is the upper bound of the initial interval for the multiplier, with the lower bound fixed at zero.
Thereby, the overall complexity of Algorithm~\ref{Alg1}, in terms of the number of arithmetic operations across both loops, can be calculated as follows
\begin{align*}
\mathcal{O}\!\left(\frac{2\rvert\mathcal{S}_\gamma\rvert^2}{1-\lambda}\log\!\left(\frac{1}{1-\lambda}\right)\!\log\!\left(\frac{\mu_{+}^{(0)}}{\varepsilon_\mu}\right)\!\right)
\end{align*}
based on the fixed $\lambda < 1$ (see 
\cite{tseng1990solving} and \cite{littman2013complexity}). The algorithm's complexity increases as the state space expands, the initial interval for the multiplier widens, $\lambda \rightarrow 1$, and $\varepsilon_\mu \rightarrow 0$.

\section{Monte Carlo Probability Distribution Estimation}\label{sec5}
In this section, we leverage the \emph{Monte Carlo} estimation method to statistically compute the estimated pmfs of the received updates’ usefulness from the endpoint's perspective at the AA, and the mapped target usefulness at the SA. To this end, we consider a time interval in the format of an \emph{estimation horizon} (E-horizon), followed by a \emph{decision horizon} (D-horizon), as illustrated in Fig.~\ref{sec5:fig1}. The E-horizon is exclusively reserved for the estimation processes and has a length of $M$ time slots, which is sufficiently large to enable an accurate estimation. The D-horizon represents the long-term time horizon with the sufficiently large length of $N\gg1$ slots, as defined in Section~\ref{sec4}, during which the agents find and apply their (model) CMDP-based decision policies. 

Within the E-horizon, the SA does not make any decisions. Instead, it focuses on communicating updates at the highest possible rate while adhering to cost constraints. Once receiving these updates, the AA measures their usefulness, stores them in memory, and sends E-ACK signals for effective updates. The SA logs whether the E-ACK has been successfully received or not at every slot of the E-horizon. Finally, employing the received E-ACK signals at the SA and the measured updates’ usefulness at the AA, both agents perform their estimations.
\begin{figure}[t!]
    \centering
    \includegraphics[width=1\linewidth]{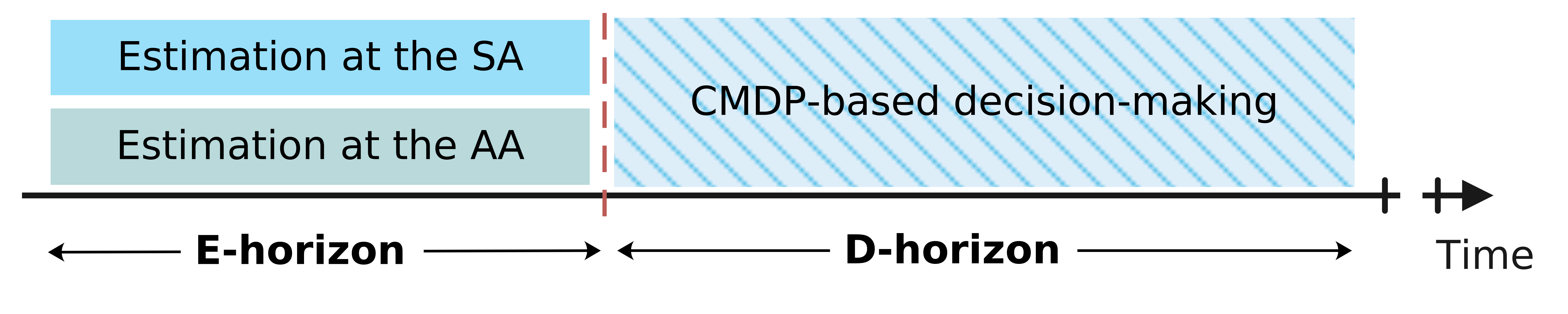}
    \vspace{-0.5cm}
    \caption{Time partitioning of the estimation and decision horizons.}
    \label{sec5:fig1}
\end{figure}

\subsection{Usefulness Probability of Received Updates}\label{sec5a}
Picking the $j$-th, $\forall j\in\mathcal{J}$, outcome from the set $\hat{\mathcal{V}}$ (defined in Section~\ref{sec3a}) that corresponds to the received update's usefulness from the endpoint's perspective at the $m$-th slot of the E-horizon, i.e., $\hat{v}_m$, the relevant estimated probability of that outcome is given by
\begin{align}
    q_j = p_{\hat{\nu}}(\hat{\nu}_j) = \frac{1}{M}\sum_{m=1}^{M} \mathbbm{1}\big\{\hat{v}_m=\hat{\nu}_j\big\}.
\end{align}

\subsection{Probability of the Mapped Target Usefulness}\label{sec5b}
We assume that the mapped target usefulness, i.e., $\hat{v}_{\rm tgt}$, is a member of the set $\hat{\mathcal{V}}_{\rm tgt}=  \{\vartheta_j\,\lvert\, j\in\mathcal{J}_{\rm tgt}\}$ with i.i.d. elements, where $\mathcal{J}_{\rm tgt}=\{1, 2, \dots, \lvert\hat{\mathcal{V}}_{\rm tgt}\rvert\}$, and the probability of the $j$-th element is equal to $p_{\hat{v}_{\rm tgt}}(\hat{v}_{\rm tgt}=\vartheta_j)$. Herein, as mentioned earlier, $p_{\hat{v}_{\rm tgt}}(\cdot)$ is the estimated pmf of the mapped target usefulness and obtained by
\begin{align}\label{sec5b:eq1}
    p_{\hat{v}_{\rm tgt}}(\vartheta_j) &= \sum_{e\in\{0,1\}}p_{\hat{v}_{\rm tgt}}\big(\vartheta_j\,\lvert\,\hat{E}=e\big)\mathrm{Pr}\big(\hat{E}=e\big) 
\end{align}
where we find the probability of successfully receiving E-ACK, i.e., $e=1$, or not, i.e., $e=0$, as follows
\begin{align}\label{sec5b:eq2}
    \mathrm{Pr}\big(\hat{E}=e\big) = \frac{1}{M}\sum_{m=1}^{M} \mathbbm{1}\big\{\hat{E}_m=e\big\}
\end{align}
where $\hat{E}_m$ indicates the E-ACK arrival status at the $m$-th slot of the E-horizon.
Furthermore, to derive the conditional probability in \eqref{sec5b:eq1}, we first consider the successful arrivals of E-ACK signals such that
\begin{align}\label{sec5b:eq3} 
    p_{\hat{v}_{\rm tgt}}\big(\vartheta_j\,\lvert\,\hat{E}=1\big) &= \frac{\sum_{i\in\mathcal{I}} p_{\nu|\hat{E}=1}\big(\nu_i\,\lvert\,\nu_i\geq\vartheta_j\big)}{\sum_{j\in\mathcal{J}_{\rm tgt}}\sum_{i\in\mathcal{I}} p_{\nu|\hat{E}=1}\big(\nu_i\,\lvert\,\nu_i\geq\vartheta_j\big)}.
\end{align}
Then, we have
\begin{align}\label{sec5b:eq4} 
    p_{\hat{v}_{\rm tgt}}\big(\vartheta_j\,\lvert\,\hat{E}=0\big) &= \frac{\sum_{i\in\mathcal{I}} p_{\nu|\hat{E}=0}\big(\nu_i\,\lvert\,\nu_i<\vartheta_j\big)}{\sum_{j\in\mathcal{J}_{\rm tgt}}\sum_{i\in\mathcal{I}} p_{\nu|\hat{E}=0}\big(\nu_i\,\lvert\,\nu_i<\vartheta_j\big)}.
\end{align}
The pmfs $p_{\nu|\hat{E}=1}(\cdot)$ and $p_{\nu|\hat{E}=0}(\cdot)$ in \eqref{sec5b:eq3} and \eqref{sec5b:eq4} are associated with an observation's importance rank given the successful and unsuccessful communication of the E-ACK, respectively. In this regard, by applying Bayes' theorem we can derive the following formula:
\begin{align}\label{sec5b:eq5}
    p_{\nu|\hat{E}=e}\big(\nu_i\big)=\frac{\frac{1}{M}\sum_{m=1}^{M} \mathbbm{1}\big\{v_m=\nu_i\land \hat{E}_m=e\big\}}{\mathrm{Pr}(\hat{E}=e)}.
\end{align}

\section{Simulation Results}
In this section, we present simulation results that corroborate our analysis and assess the performance gains in terms of effectiveness achieved by applying different update models and agent decision policies in end-to-end status update systems. 

\subsection{Setup and Assumptions}
We study the performance over $5\times10^5$ time slots, which includes the E-horizon and the D-horizon with $1\times10^5$ and $4\times10^5$ slots, respectively. To model the Markovian effect-agnostic policy, we consider a Markov chain with two states, $0$ and $1$. We assume that the self-transition probability of state $0$ is $0.9$, while the one for state $1$ relies on the controlled update transmission or query rate. Without loss of generality, we assume that the outcome spaces for the usefulness of generated updates, i.e., $\mathcal{V}$, the usefulness of received updates, i.e., $\hat{\mathcal{V}}$, and the mapped target usefulness, i.e., $\hat{\mathcal{V}}_{\rm tgt}$, are bounded within the span $[0, 1]$. For simplicity, we divide each space into discrete levels based on its number of elements in ascending order, where every level shows a randomized value. We also consider that the $i$-th outcome of the set $\mathcal{V}$ notated as $\nu_i$, $i\in\mathcal{I}$, occurs following a \emph{beta-binomial} distribution with pmf
\begin{align}
    p_\nu(\nu_i) = \binom{\lvert\mathcal{V}\rvert-1}{i-1} \frac{\operatorname{Beta}(i-1+a, \lvert\mathcal{V}\rvert-i+b)}{\operatorname{Beta}(a, b)}
\end{align}
where $\operatorname{Beta}(\cdot, \cdot)$ is the beta function, and $a=0.3$ and $b=0.3$ are shape parameters.

\noindent
\renewcommand{\arraystretch}{1}
\begin{table}[!t]
\begin{center}
\caption{Parameters for simulation results}\label{sim:tab1}
\begin{tabular}{| l | c | c |}
\hline
\!\footnotesize \textbf{Name} & \!\footnotesize \textbf{Symbol}\! &\footnotesize \textbf{Value}\\
\hline
\hline
\footnotesize \!E-horizon length &\footnotesize -- &\footnotesize \!\!$1\!\times\!10^5\,[\text{slot}]$\!\!\\
\hline
\footnotesize \!D-horizon length &\footnotesize $N$ &\footnotesize  \!\!$4\!\times\!10^5\,[\text{slot}]$\!\!\\
\hline
\footnotesize \!Erasure probability in update channel &\footnotesize $p_{\epsilon}$ &\footnotesize $0.2$\\
\hline
\footnotesize \!Erasure probability in acknowledgment link &\footnotesize $p_{\epsilon}^\prime$ &\footnotesize $0.1$\\
\hline
\footnotesize \!Length of generated update usefulness space &\footnotesize $\lvert\mathcal{V}\rvert$ &\footnotesize $10$\\
\hline
\footnotesize \!Length of received update usefulness space &\footnotesize $\lvert\hat{\mathcal{V}}\rvert$ &\footnotesize $11$\\
\hline
\footnotesize \!Length of mapped target usefulness space &\footnotesize $\lvert\hat{\mathcal{V}}_{\rm tgt}\rvert$ &\footnotesize $11$\\
\hline
\multirow{2}{*}{\footnotesize \!Shape parameters for usefulness distribution} &\footnotesize $a$ &\footnotesize $0.3$\\
\cline{2-3}
&\footnotesize $b$ &\footnotesize $0.3$\\
\hline
\footnotesize \!Maximum truncated AoI &\footnotesize \!$\Delta_{\rm max}$\! &\footnotesize $10\,[\text{slot}]$\\
\hline
\footnotesize \!Action window width in pull-based model &\multirow{3}{*}{ \footnotesize \!$\Theta_{\rm max}$\!} & \footnotesize $1\,[\text{slot}]$\\
\cline{1-1}\cline{3-3}
\footnotesize \!Action window width in push-and-pull model\! & \footnotesize & \footnotesize $5\,[\text{slot}]$\\
\cline{1-1}\cline{3-3}
\footnotesize \!Action window width in push-based model & \footnotesize & \footnotesize $10\,[\text{slot}]$\\
\hline
\footnotesize \!Update transmission cost at $n$-th slot &\footnotesize $C_{n,1}$ &\footnotesize $0.1$\\
\hline
\footnotesize \!Query raising cost at $n$-th slot &\footnotesize $C_{n,2}$ &\footnotesize $0.1$\\
\hline
\footnotesize \!Actuation availability cost at $n$-th slot &\footnotesize $C_{n,3}$ &\footnotesize $0.01$\\
\hline
\footnotesize \!Maximum discounted cost in decision problem\! &\footnotesize \!$C_{\gamma, \rm max}$\! &\footnotesize $0.08$\\
\hline
\footnotesize \!Target effectiveness grade &\footnotesize \!$\operatorname{GoE}_{\rm tgt}$\! &\footnotesize $0.6$\\
\hline
\footnotesize \!Discount factor in CMDP &\footnotesize $\lambda$ &\footnotesize $0.75$\\
\hline
\multirow{2}{*}{\footnotesize \!Convergence accuracy in Algorithm~\ref{Alg1}} &\footnotesize $\varepsilon_\mu$ &\footnotesize $10^{-4}$\\
\cline{2-3}
&\footnotesize $\varepsilon_\pi$ &\footnotesize $10^{-4}$\\
\hline
\footnotesize \!Mixing probability in bisection method &\footnotesize $\eta$ &\footnotesize $0.5$\\
\hline
\footnotesize \!Controlled transmission rate (effect-agnostic) &\footnotesize -- &\footnotesize $0.8$\\
\hline
\footnotesize \!Controlled query rate (effect-agnostic) &\footnotesize -- &\footnotesize $0.8$\\
\hline
\end{tabular}
\medskip
\end{center}
\end{table}
Moreover, we plot the figures based on the following form of the GoE metric\footnote{The analysis can be easily extended to any other forms of the GoE metric.}, which comes from the general formulation proposed in \eqref{sec3:eq1}: 
\begin{align}\label{sim:eq1}
    \operatorname{GoE}_n = \frac{\hat{v}_n}{\Delta_n\Theta_n} - \alpha_n C_{n, 1} - \beta_n C_{n, 2} - C_{n, 3}
\end{align}
for the $n$-th time slot. Herein, $C_{n, 1}$ is the communication cost, $C_{n, 2}$ denotes the query cost, and $C_{n, 3}$ indicates the actuation availability cost at the AA which depends on the update communication model. However, the cost function in the decision problem's constraint is assumed to be $c_\gamma(\gamma_n) = \gamma_n$, $\forall n\in\mathbb{N}$. The parameters used in the simulations are summarized for Table~\ref{sim:tab1}. In the legends of the plotted figures, we depict the policies at the agents in the form of a tuple, with the first and second elements referring to the policy applied in the SA and the AA, respectively. For an effect-aware policy, we simply use the notation ``E-aware,'' while for an effect-agnostic policy, only the modeling process is mentioned in the legend.

\subsection{Results and Discussion}
Fig.~\ref{results:fig1} illustrates the evolution of the average cumulative effectiveness over time for the push-and-pull model and different agent decision policies. As it is shown, applying the effect-aware policy in both agents offers the highest effectiveness, where the gap between its and the other policies’ performance increases gradually as time passes. Nevertheless, using the other effect-agnostic policies at either or both agent(s) diminishes the effectiveness performance of the system by at least $12\%$ or $36\%$, respectively. Particularly, if the SA and the AA apply the effect-aware and Markovian effect-agnostic policies, sequentially, the offered effectiveness is even lower than the scenario in which both agents use periodic effect-agnostic policies. One of the reasons is that the Markovian query raising misleads the SA in estimating the mapped target usefulness of updates. It is also worth mentioning that throughout the E-horizon from slot $1$ to $1\times 10^5$, the scenarios where the AA initiates effect-aware queries offer the same performance and better than those of the other scenarios. However, in the D-horizon, a performance gap appears and evolves depending on the applied policy at the SA. Therefore, by making effect-aware decisions at both agents based on estimations in the E-horizon and CMDP-based policies in the D-horizon, performance can be significantly improved. 
\begin{figure}[t!]
    \centering
    \includegraphics[width=0.4\textwidth]{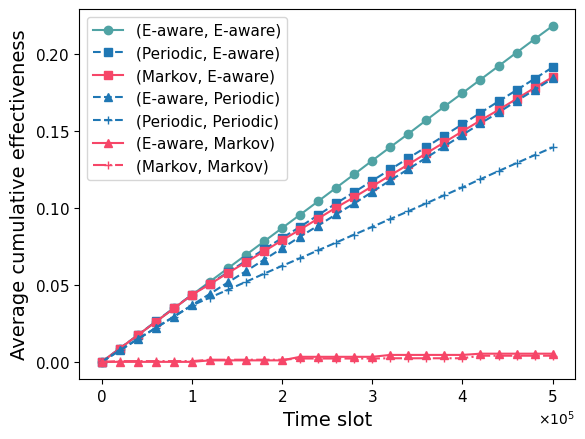}
    \caption{Evolution of the average effectiveness accumulated over time in the push-and-pull model.}
    \label{results:fig1}
\end{figure}

The bar chart in Fig.~\ref{results:figbar} shows the average rate of update transmissions from the SA and the consequent actions performed at the AA based on the initiated queries that result in the system’s effectiveness, as depicted in Fig.~\ref{results:fig1}. Interestingly, the scenario where both agents apply the effect-aware policy has lower transmission and action rates compared to the other scenarios except the ones with Markovian effect-agnostic query policies. However, the number of actions that can be taken for those scenarios with Markovian queries is limited since most updates are received outside the action windows. Figs.~\ref{results:fig1} and \ref{results:figbar} show that using effect-aware policies at both the SA and the AA not only brings the highest effectiveness but also needs lower update transmissions by an average of $11\%$, saving resources compared to the scenarios that have comparable performances. We also note that although effect-aware and periodic query decisions with effect-aware update transmission have almost the same action rate, the effect-aware case leads to more desirable effects or appropriate actions at the endpoint. This results in around $16\%$ higher average cumulative effectiveness, referring to Fig.~\ref{results:fig1}. 
\begin{figure}[t!]
    \centering
    \includegraphics[width=0.4\textwidth]{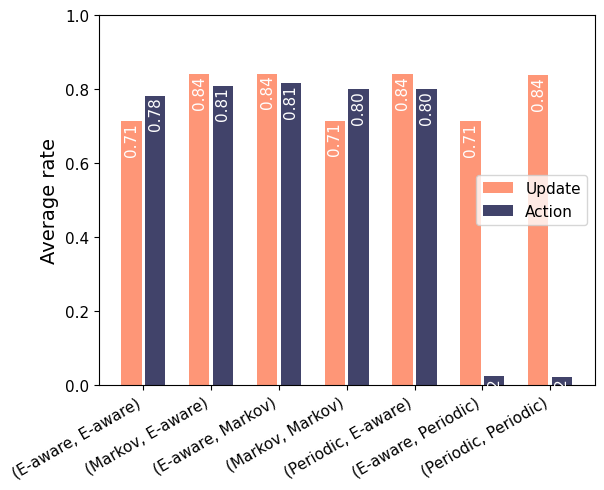}
    \caption{Average update transmission and query rates for different decision policies in the push-and-pull model.}
    \label{results:figbar}
\end{figure}

\begin{figure}[t!]
    \centering
    \subfloat[]{
    \includegraphics[width=0.4\textwidth]{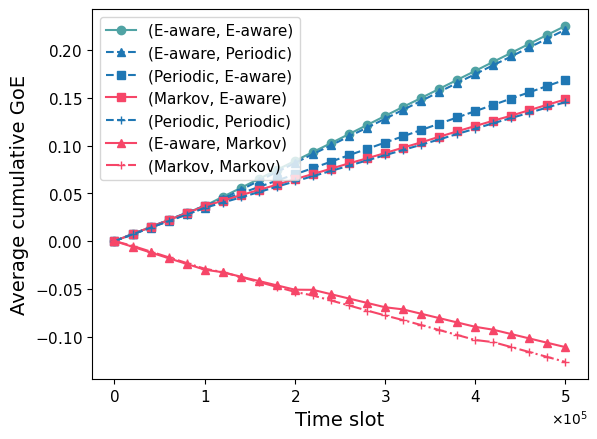} \label{results:fig2a}
    }
    \hfil
    \subfloat[]{
    \includegraphics[width=0.4\textwidth]{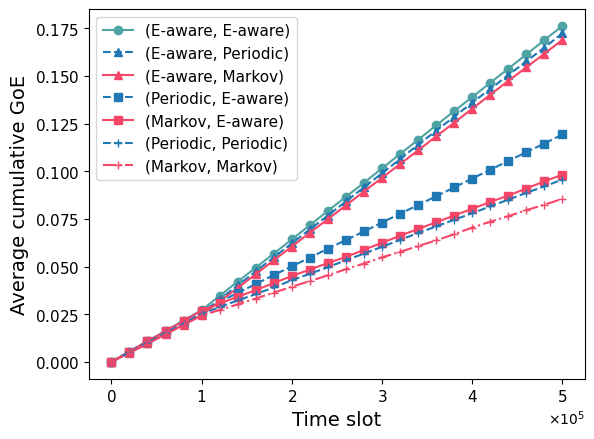} \label{results:fig2b}
    }
    \hfil
    \subfloat[]{
    \includegraphics[width=0.4\textwidth]{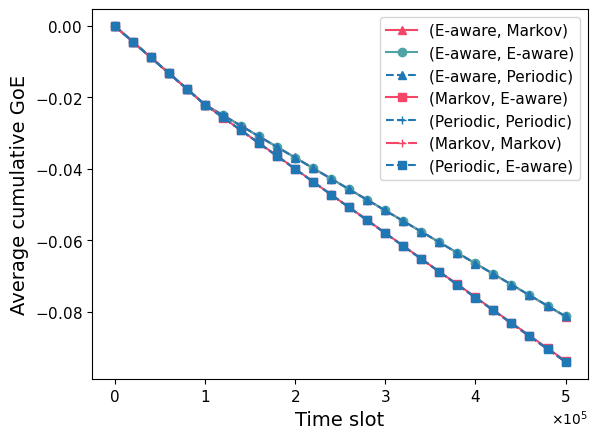} \label{results:fig2c}
    }
    \caption{Temporal evolution of the average cumulative GoE for the (a) push-and-pull, (b) push-based, and (c) pull-based models.}
    \label{results:fig2}
\end{figure}
Figs.~\ref{results:fig2a}, \ref{results:fig2b}, and \ref{results:fig2c} present the average cumulative GoE provided in the system over $5\times10^5$ time slots for the push-and-pull, push- and pull-based communication models, respectively. The corresponding effectiveness at the endpoint for the primary model can be found in Fig.~\ref{results:fig1}. The plots demonstrate that when both agents decide based on effect-aware policies, regardless of the update communication model, the highest offered GoE of the system is reached. However, in other scenarios, the performance of some policies exceeds those of others, depending on the update model. For instance, having effect-aware update transmission and Markovian queries shows a $2.52$ times higher average GoE for the push-based model than the push-and-pull one. This is because the push-based model has a larger action window.

Besides, comparing Figs.~\ref{results:fig2a} and \ref{results:fig2b}, the reason that the provided GoE by the effect-aware decisions at both agents is $28\%$ lower for the push-based model compared to the push-and-pull is that the AA has to be available longer, which causes a higher cost. Also, with the pull-based model as in Fig.~\ref{results:fig2c}, the average GoE within a period is less than the average cost since the AA is only available to act at query instants, significantly reducing the average GoE despite the high update transmission rate. In the pull-based model, however, applying the CMDP-based update transmission decisions at the SA can address this issue with a $16\%$ higher average GoE.

The trade-off between the average effectiveness and the width of the action window, i.e., $\Theta_{\rm max}$, is shown in Fig.~\ref{results:fig10} for different decision policies. We see that the system cannot offer notable effectiveness with $\Theta_{\rm max}=1$, i.e., under the pull-based model. However, by expanding the action window width from $\Theta_{\rm max}=1$ to $10\,[\text{slot}]$, the average effectiveness boosts from its lowest to the highest possible value. Since all policies already reach their best performance before $\Theta_{\rm max}=10$, which indicates the push-based model, we can conclude that the push-and-pull model with a flexible action window is more advantageous than the push-based one with a very large fixed window. In addition, the scenario where both agents use effect-aware policies outperforms the others for $\Theta_{\rm max}>1$. It is worth mentioning that the performance of the scenarios where the AA initiates Markovian queries converges to the highest level for very large widths. The average cumulative effectiveness of these scenarios, as in Fig.~\ref{results:fig1}, visibly rises for $\Theta_{\rm max}\geq9$. 
\begin{figure}[t!]
    \centering
    \includegraphics[width=0.4\textwidth]{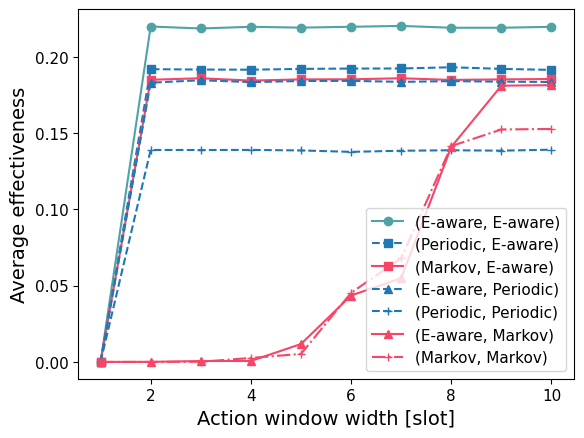}
    \caption{Average effectiveness versus the action window width in the push-and-pull communication model.}
    \label{results:fig10}
\end{figure}

\begin{figure}[t!]
    \centering
    \includegraphics[width=0.4\textwidth]{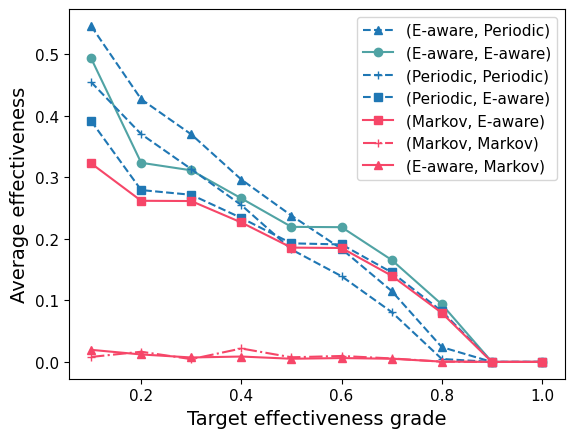}
    \caption{Average effectiveness as a function of the target effectiveness grade in the push-and-pull model.}
    \label{results:fig5}
\end{figure}
The interplay between the average effectiveness and the target effectiveness grade, i.e., $\operatorname{GoE}_{\rm tgt}$ from Section~\ref{sec3c}, is depicted in Fig.~\ref{results:fig5} for the push-and-pull model and different agent decision policies. Evidently, the average effectiveness under all policies decreases gradually with the increase of the target effectiveness grade and converges to zero for $\operatorname{GoE}_{\rm tgt}\geq 0.9$. Using effect-aware policies at both agents offers the highest effectiveness for medium-to-large target grades, i.e., $\operatorname{GoE}_{\rm tgt}\geq0.52$ here. However, for lower target grades, the effect-aware and periodic effect-agnostic decisions at the SA and the AA, respectively, result in better performance. This comes at the cost of higher transmission and action rates. Thus, there is a trade-off between the paid cost and the offered effectiveness, thus various policies can be applied depending on the cost criterion and the target effectiveness grade.

Fig.~\ref{results:fig6} depicts the average effectiveness obtained in the system through $5\times 10^5$ time slots versus the controlled update transmission rate for the push-and-pull model and various agent decision policies. Concerning Section~\ref{sec2b}, this controlled rate is related to the effect-agnostic update transmission policies at the SA and denotes the expected number of updates to be communicated within the specified period. Therefore, the performance of the other policies should remain fixed for different controlled transmission rates. Fig.~\ref{results:fig6} reveals that increasing the controlled update rate increases the offered effectiveness when the SA applies the effect-agnostic policies. However, even at the highest possible rate, subject to the maximum discounted cost, using effect-aware policies at both agents is necessary to ensure the highest average effectiveness, regardless of the controlled transmission rate. As an illustrative example, when the AA initiates effect-aware queries, the effectiveness drops by an average of $38\%$ ($43\%$) if the SA transmits updates based on the periodic (Markovian) effect-agnostic policy instead of using the effect-aware one. 
\begin{figure}[t!]
    \centering
    \includegraphics[width=0.4\textwidth]{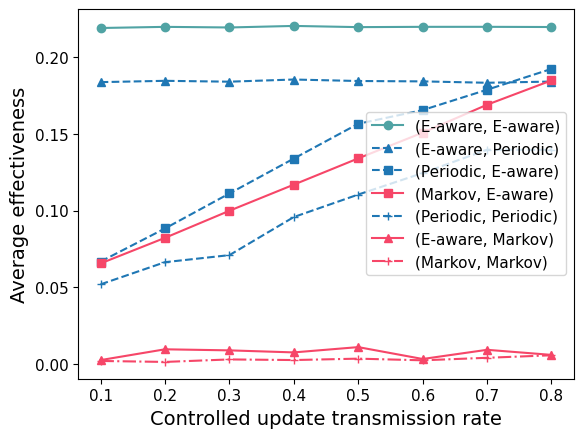}
    \caption{Comparison between different policies with variable update transmission rates but fixed query rates for effect-agnostic policies.}
    \label{results:fig6}
\end{figure}

In Fig.~\ref{results:fig7}, the plot shows the same trend as Fig.~\ref{results:fig6}, but this time it focuses on the changes in the average effectiveness versus the controlled query rate. Herein, the controlled query rate is dedicated to the scenarios where the AA operates under effect-agnostic policies. The results indicate that the average effectiveness rises via the increase of the controlled query rate for the periodic and Markovian policies. However, the increase in effectiveness is not significant for the latter one. Despite this, even with the highest rates, the effectiveness offered in the course of effect-aware queries is still higher than those of raising effect-agnostic queries. Therefore, the highest average effectiveness is achieved when both agents make effect-aware decisions, as depicted in Figs.~\ref{results:fig6} and \ref{results:fig7}. 
\begin{figure}[t!]
    \centering
    \includegraphics[width=0.4\textwidth]{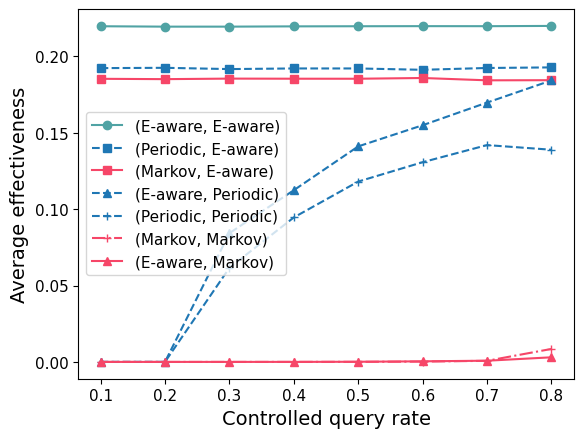}
    \caption{Comparison between different policies with variable query rates but fixed update transmission rates for effect-agnostic policies.}
    \label{results:fig7}
\end{figure}

In the context of the decision-making problem~$\mathcal{P}_2$ in \eqref{sec4:eq7}, altering the maximum discounted cost, i.e., $C_{\gamma, \rm max}$, can impact the decisions made by each agent. To study this, we have plotted Fig.~\ref{results:fig8} for the push-and-pull model under different decision policies. The figure shows that the stricter the cost constraint, the lower the average effectiveness, irrespective of the decision policy. Also, for all cost constraints, the scenario in which both agents use effect-aware policy yields the best performance, whereas the other policies outperform each other under different cost constraints. Due to CMDP-based decisions, the gap between the effectiveness performance of the best scenario and those of the others increases as the constraint decreases until $C_{\gamma, \rm max}=0.1$, where the SA can transmit all updates, and the AA can initiate queries without restrictions.
\begin{figure}[t!]
    \centering
    \includegraphics[width=0.4\textwidth]{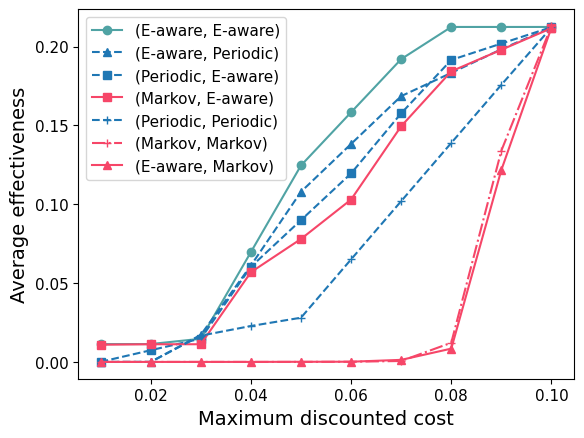}
    \caption{Average effectiveness achieved through different policies versus the maximum discounted cost in the push-and-pull model.}
    \label{results:fig8}
\end{figure}

According to the CMDP modeled at either agent, model-based effect-aware policies heavily rely on the conditions of the update and/or E-ACK channels. To evaluate this impact, Fig.~\ref{results:fading_channel} illustrates the average effectiveness as a function of the channel erasure probability. For clarity, we assume $p_{\epsilon}= p_{\epsilon}^\prime$, where $p_{\epsilon}$ and $p_{\epsilon}^\prime$ denote the erasure probabilities in the forward communication and the acknowledgment links, respectively. At lower erasure probabilities, particularly when $p_{\epsilon}\leq 0.52$, employing effect-aware policies at both agents yields the highest average effectiveness. However, as channel conditions worsen, the performance of effect-aware decision-making gradually declines. Notably, when $p_{\epsilon}\geq 0.66$, the effectiveness of pulling updates using effect-aware policies drops to its lowest levels.
\begin{figure}[t!]
    \centering
    \includegraphics[width=0.4\textwidth]{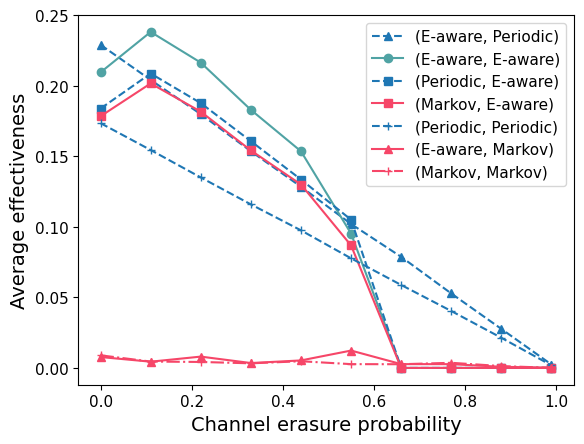}
    \caption{Interplay between average effectiveness and erasure probability for both the update and E-ACK channels in the push-and-pull model.}
    \label{results:fading_channel}
\end{figure}

Afterward, we compare the performance provided by model-based agent decisions discussed in Section~\ref{sec4} with that of \emph{model-free} decisions. The latter is based on reinforcement learning (RL), where each agent separately learns to make decisions through direct interaction with the environment. This learning process is modeled under the state (here, observation) spaces, action sets, and rewards according to Section~\ref{sec4b}, without relying on the construction of a predefined model. To derive model-free decisions, we employ a \emph{deep Q-network} (DQN) and parameterize an approximate value function for every agent within the E-horizon through a multilayer perceptron (MLP), assisted with the experience replay mechanism. The neural network consists of two hidden layers, each with $64$ neurons, and is trained using the adaptive moment estimation (Adam) optimizer. The default values for the RL setting are taken from \cite{mnih2015human}, except for the learning rate that is $10^{-4}$, and the discount factor assumed $0.75$, as aligned with Table~\ref{sim:tab1}. In this regard, Fig.~\ref{RLresults:fig1} depicts the evolution of the average cumulative effectiveness over time for the push-and-pull model. The graph compares the performance achieved by model-based agent decisions with that of model-free (RL-based) ones. As shown, the model-based approach outperforms the model-free approach across all decision policies. However, the performance gap is less significant when both agents apply effect-aware policies. Considering the CMDP modeled at both agents, several factors may contribute to the higher effectiveness of the model-based approach, such as the relatively small state space, especially for the modeling at the SA.
\begin{figure}[t!]
    \centering
    \includegraphics[width=0.4\textwidth]{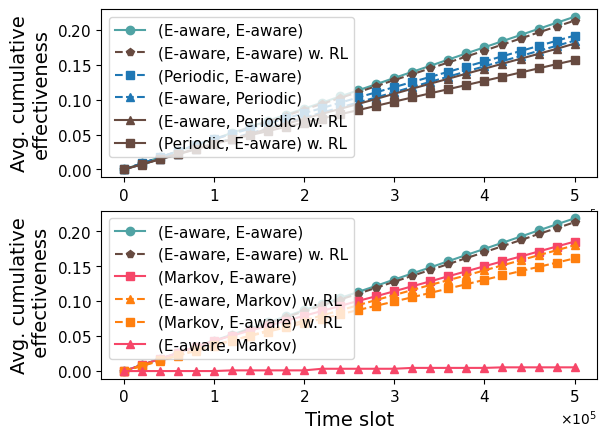}
    \caption{Temporal evolution of average cumulative effectiveness under model-based and model-free (RL-based) policies in the push-and-pull model.}
    \label{RLresults:fig1}
\end{figure}

Additionally, the length of the E-horizon potentially influences the performance and complexity of both decision-making approaches. To analyze this impact, Fig.~\ref{results:fig9} demonstrates how the E-horizon length affects the average effectiveness of effect-agnostic and effect-aware policies under both model-based and model-free approaches for the push-and-pull model. The results show that increasing the E-horizon length enhances performance, reaching its peak at an optimal E-horizon length. This optimal length value varies depending on the policy and approach employed. For instance, when both agents employ effect-aware policies, achieving the highest effectiveness requires approximately $10^{2.9}$ and $10^{3.7}$ slots for the model-based and model-free approaches, respectively, with convergence accuracy specified in Table~\ref{sim:tab1}. In this context, the required E-horizon length serves as a measure of the complexity of the corresponding algorithm for each approach. For the above example, the model-based algorithm (Algorithm~\ref{Alg1}) achieves a complexity that is $6.31$ times lower compared to the model-free DQN algorithm.
\begin{figure}[t!]
    \centering
    \includegraphics[width=0.4\textwidth]{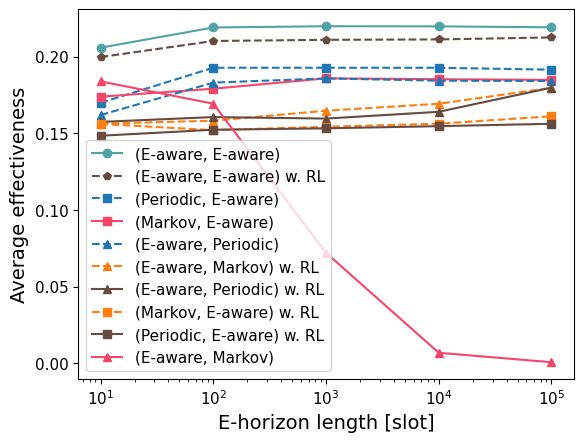}
    \caption{Impact of E-horizon length on the effectiveness of model-based and model-free (RL-based) policies in the push-and-pull model.}
    \label{results:fig9}
\end{figure}

\subsection{Lookup Maps for Agent Decisions}
\begin{figure}[t!]
    \centering
    \subfloat[]{
    \includegraphics[width=0.201\textwidth]{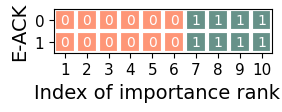}
    \label{th:fig10a}
    }
    \hfil
    \subfloat[]{
    \includegraphics[width=0.201\textwidth]{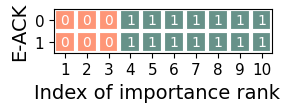}
    \label{th:fig10b}
    }  
    \caption{Lookup maps for decision-making at the SA with (a) $C_{\alpha, \rm max}=0.06$ and (b) $C_{\alpha, \rm max}=0.08$.}
    \label{th:fig10}
\end{figure}
\begin{figure}[t!]
    \centering
    \subfloat[]{
    \includegraphics[width=0.4\textwidth]{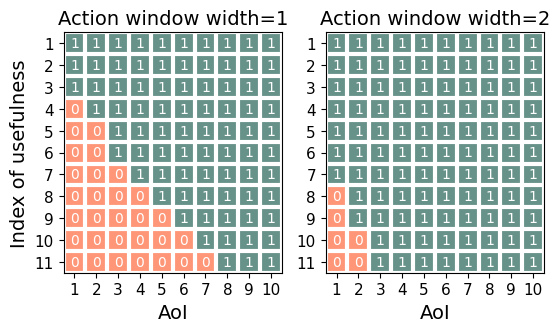}
    \label{th:fig11a}
    }
    \hfil
    \subfloat[]{
    \includegraphics[width=0.212\textwidth]{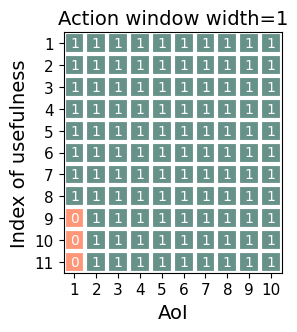}

    \label{th:fig11b}
    }
    \caption{Lookup maps for decision-making at the AA with (a) $C_{\beta, \rm max}=0.06$ and (b) $C_{\beta, \rm max}=0.08$. }
    \label{th:fig11}
\end{figure}
As the modeled CMDPs in Section~\ref{sec4b} have finite states, we can depict optimal mode-based decisions derived in Algorithm~\ref{Alg1} via a multi-dimensional \emph{lookup map} for each decision-making agent. Figs.~\ref{th:fig10} and \ref{th:fig11} illustrate the maps for the SA and the AA, respectively, under different maximum discounted costs. The number of dimensions in a map relies on the number of elements constructing every state of the relevant CMDP, with each dimension assigned to one element. With the lookup map in hand, an agent can make optimal decisions at each slot based on its current state. When comparing the same maps for different maximum discounted costs, the observation emerges that the more stringent the cost constraint is, the narrower the agent decision boundaries become. Thus, the maps could vary by changes in the parameters’ values or the goals with different target effectiveness grades. It is noteworthy that in Fig.~\ref{th:fig11}, the maps with $\Theta_n=1$ represent the pull-based model, while the push-and-pull model converges to the push-based model with $\Theta_n\geq3$ and $\Theta_n\geq2$ in Figs.~\ref{th:fig11a} and \ref{th:fig11b}, respectively.

We compute an optimal threshold for each element of the state as a decision criterion, given the values of the other elements. Let us consider $\Omega_\alpha$ and $\Omega_\beta$ as the decision criteria at the SA and AA, respectively. To derive the optimal decision $\alpha_n^*$ at the $n$-th slot, there are two alternative ways to define the criterion:
\begin{itemize}
    \item[\textcolor{black}{\rule{0.15cm}{0.15cm}}] $\Omega_\alpha$ is a threshold for the index of the update's importance rank, i.e., $i\in\mathcal{I}$, for $v_n=\nu_i, \forall\nu_i\in\mathcal{V}$, given the E-ACK, i.e., $\hat{E}_n$. Thus, we have
    \begin{equation}
        \alpha_n^* = \mathbbm{1}\big\{i \geq \Omega_\alpha(\hat{E}_n)\,\lvert\,\hat{E}_n\big\}.
    \end{equation}
     \item[\textcolor{black}{\rule{0.15cm}{0.15cm}}] $\Omega_\alpha$ shows a threshold for the E-ACK given the importance rank of the update, such that
    \begin{equation}
        \alpha_n^* = \mathbbm{1}\big\{\hat{E}_n \geq \Omega_\alpha(v_n)\,\lvert\,v_n\big\}.
    \end{equation}
\end{itemize}
For instance in Fig.~\ref{th:fig10a}, $\Omega_\alpha(\hat{E}_n=0)=\Omega_\alpha(\hat{E}_n=1)=7$, $\forall n$. Also, $\Omega_\alpha(v_n=\nu_i)>1$ for $i\leq6$, while $\Omega_\alpha(v_n=\nu_i)=0$ for $i>6$.
Applying the same approach, we find the decision criterion $\Omega_\beta$ to obtain $\beta_n^*$ from three different viewpoints.

\section{Conclusion}
We investigated decision-making for enhancing the effectiveness of updates communicated in the end-to-end status update system based on a push-and-pull communication model. To this end, we considered that the SA observes an information source and generates updates, and its transmission controller decides whether to send them to the AA or not. On the other hand, the AA is responsible for acting based on the updates that are successfully received and the queries initiated to accomplish a subscribed goal at the endpoint. After defining the GoE metric, we formulated the decision problem of finding optimal effect-aware policies that maximize the expected discounted sum of the update’s effectiveness subject to cost constraints. Using the dual problem, we cast it to a CMDP solved separately for each agent based on different model components and proposed an iterative algorithm to obtain the decision policies. Our results established that the push-and-pull model, on average, outperforms the push- and pull-based models in terms of energy efficiency and effectiveness, respectively. Furthermore, effect-aware policies at both agents significantly enhance the effectiveness of updates with a considerable difference in comparison to those of periodic and probabilistic effect-agnostic policies used at either or both agent(s). Finally, we proposed a threshold-based decision policy complemented by a tailored lookup map for each agent that employs effect-aware policies. Future works could explore multi-SA scenarios with varying sensing abilities and realizations' time-dependent importance.

\bibliographystyle{IEEEtran}
\bibliography{References.bib}


\end{document}